  \providecommand\BibTeX{{%
    \normalfont B\kern-0.5em{\scshape i\kern-0.25em b}\kern-0.8em\TeX}}}
\newtheorem{thm}{Theorem}
\newtheorem{claim}[thm]{Claim}
\begin{document}

\title{Balancing Flow Time and Energy Consumption}

\author{Sami Davies}
\authornote{All authors contributed equally to this research.}
\authornote{Supported by a NSF CI Fellowship.}
\email{sami@northwestern.edu}
\affiliation{Computer Science Department
\institution{Northwestern University}
\streetaddress{2233 Tech Drive}
\city{Evanston, IL}
\country{USA}}

\author{Samir Khuller}
\authornotemark[1]
\authornote{Supported by an Adobe Research award and an Amazon Research award.}
\email{samir.khuller@northwestern.edu}
\affiliation{Computer Science Department
\institution{Northwestern University}
\streetaddress{2233 Tech Drive}
\city{Evanston, IL}
\country{USA}}

\author{Shirley Zhang}
\authornotemark[1]
\email{skzhang@alumni.princeton.edu}
\affiliation{Computer Science Department
\institution{Northwestern University}
\streetaddress{2233 Tech Drive}
\city{Evanston, IL}
\country{USA}}


\begin{abstract}

In this paper, we study the following batch scheduling model: find a schedule that minimizes total flow time for $n$ uniform length jobs, with release times and deadlines, where the machine is only actively processing jobs in at most $k$ synchronized batches of size at most $B$. Prior work on such batch scheduling models has considered only feasibility with no regard to the flow time of the schedule. However, algorithms that minimize the cost from the scheduler's perspective---such as ones that minimize the active time of the processor---can result in schedules where the total flow time is arbitrarily high \cite{ChangGabowKhuller}. Such schedules are not valuable from the perspective of the client. In response, our work provides dynamic programs which minimize flow time subject to active time constraints. Our main contribution focuses on jobs with agreeable deadlines; for such job instances, we introduce dynamic programs that achieve runtimes of O$(B \cdot k \cdot n)$ for unit jobs and O$(B \cdot k \cdot n^5)$ for uniform length jobs. These results improve upon our modification of a different, classical dynamic programming approach by Baptiste. While the modified DP works when deadlines are non-agreeable, this solution is more expensive, with runtime $O(B \cdot k^2 \cdot n^7)$ \cite{Baptiste00}.

\end{abstract}

\begin{CCSXML}
<ccs2012>
<concept>
<concept_id>10003752.10003809.10003636.10003808</concept_id>
<concept_desc>Theory of computation~Scheduling algorithms</concept_desc>
<concept_significance>500</concept_significance>
</concept>
</ccs2012>
\end{CCSXML}

\ccsdesc[500]{Theory of computation~Scheduling algorithms}

\keywords{scheduling, dynamic programming, energy minimization}


\maketitle

\section{Introduction}

There has been an increasing focus in the scheduling literature on power conservation and minimizing the energy consumption of the processor \cite{BaptisteChrobakDurr07, ChangGabowKhuller, ChangKhullerMukherjee, FinemanSheridan15, KSST-busytime, KoehlerKhuller17-busytime, KumarKhuller18, SimonFalkMegowTeich20, AntoniadisGargKumarKumar20}. One motivation for developing models that consider processor time is the desire to minimize environmental and financial costs at data centers where accessing memory is expensive. There are several natural ways in which energy consumption can be taken into account. If the cost of turning the processor on is prohibitive, reasonable objective functions include minimizing the gaps between scheduled jobs or maximizing the intervals with consecutively scheduled jobs \cite{Baptiste06, BaptisteChrobakDurr07, FinemanSheridan15, AntoniadisGargKumarKumar20}. On the other hand, if a machine has a high cost whenever it is on but a relatively small setup cost, it is reasonable to instead minimize the amount of time the machine is on \cite{ChangGabowKhuller, KumarKhuller18, ChangKhullerMukherjee, KSST-busytime, KoehlerKhuller17-busytime}. We consider active time constraints of the second form.

Introduced by Chang, Gabow, and Khuller,  
the \emph{active time problem} considers a set of $n$ unit jobs, each with a release time and deadline, where the goal is to schedule the jobs on a single machine in a minimal number of batches, such that each batch contains at most $B$ jobs \cite{ChangGabowKhuller}.
Here, a batch is a group of jobs that can be performed on the machine together and $B$ is the maximum capacity of the processor. Note that minimizing the number of batches is equivalent to minimizing the number of active slots when jobs have uniform length.  When jobs are non-unit length, and can be scheduled preemptively the active time problem is known to be \textsf{NP}-complete, and there exists several 2-approximation algorithms \cite{SahaPurohit21, ChangKhullerMukherjee, KumarKhuller18, CalinescuWang21}.
All prior research on batch scheduling before the work of Chang, Gabow, and Khuller focused on finding feasible schedules without worrying about minimizing the number of batches. However, if each active time slot is expensive, it is natural to want to minimize cost by minimizing the number of batches. The active time model cleanly captures the difficulty in applications such as minimizing the fiber costs of Optical Add Drop Multiplexers (OADMs) and VM consolidation in cloud computing \cite{FlamminiMMSSTZ10, ChauL20}.  

Let $[n]$ denote the set of jobs to schedule, and let $j \in [n]$ denote a single job with integral release time $r_j$ and deadline $d_j$.  We consider the setting where all jobs have length $p \in \mathbb{N}$, and each job $j$ must be scheduled at $p$ consecutive time slots, i.e. non-preemptively, in the interval $[r_j,d_j]$. 
The processor performs synchronous batching, so a batch cannot start until the last is finished, even if it was not full.
Some of our results are in the unit length setting, where $p=1$. We also focus on results with the practical, well-studied assumption that for all $i,j \in [n]$, if $r_i < r_j$ then $d_i \leq d_j$; such deadlines are called agreeable \cite{BampisDKM12, AngelBC14, AlbersMS14, KononovK20}.
The completion time of job $j$ is denoted by $C_j$, so $C_j \in [r_j+p,d_j]$. Up to $B$ jobs can be scheduled in a batch, and the flow time of a schedule with completion times $\{C_j\}$ is $\sum_{j} (C_j-r_j)$. Note that for uniform length jobs, minimizing flow time and wait time are equivalent as flow time is equal to wait time plus the processing length of the job. Due to integrality assumptions, it is without loss of generality to assume that time is slotted.
Given a set of $n$ jobs of length $p$---each equipped with their integral release times and deadlines---and a budget of $ p \cdot k$ active time slots, our goal is to find an assignment of the jobs to time slots that minimizes flow time such that at most $ p \cdot k$ time slots (or equivalently, $k$ batches) are active.

There is an inherent trade-off between the flow time of a schedule and the number of active time slots of the machine, which we will exemplify with a shuttle bus service. 
Consider a shuttle service that sends shuttle buses out to take riders from a parking garage to an event.
The shuttle service wants to minimize costs---or CO$_2$ emissions---by sending as few shuttles as possible, but at the same time the service wants to keep their passengers happy by minimizing passengers' average wait time. These two objectives are not symbiotic. If we only wanted to minimize wait time, we would send a shuttle per passenger, and if we tried to batch passengers as much as possible, the earlier passengers on a given shuttle may have to wait for a long time. Deadlines can ensure that no single customer is waiting for a very long time, and then
we can fix the maximum number of shuttles we are willing to send out and minimize total wait time subject to the number of shuttles available.  If we had a fast algorithm to do this, we could then use that algorithm with a range of values for the hard constraint to find an active time versus flow time trade-off that is acceptable. See Figure \ref{fig: new-slot} for example schedules using $k' \leq k=3$ non-empty active batches.

\begin{figure}
    \centering
    \includegraphics[width = 8cm]{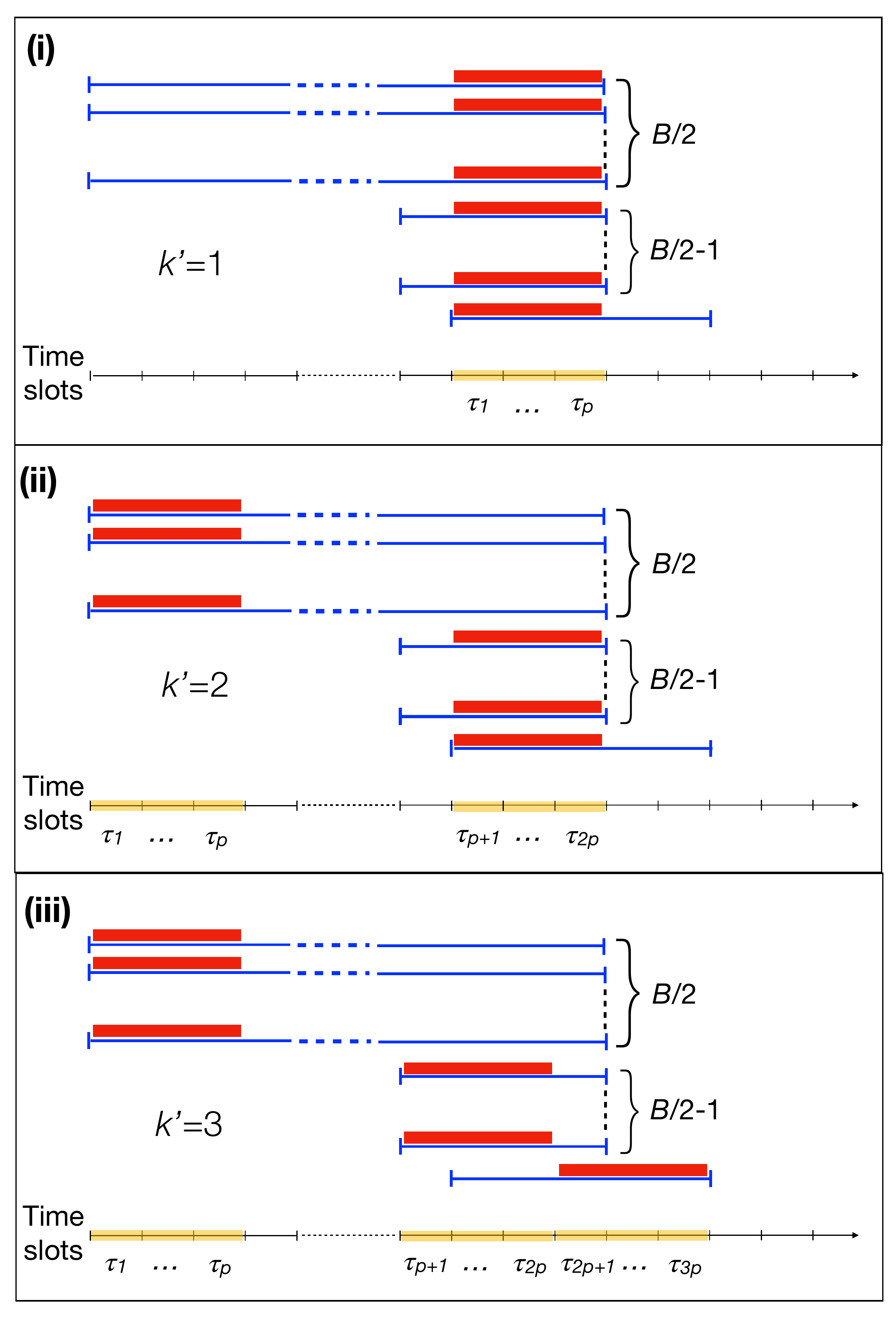}
    \caption{Fix $k=3$ to be the number of active batches. Throughout, $p$ is the length of the jobs, $k' \leq k =3$ is the number of non-empty active batches, and $B$ is the batch capacity. Blue intervals indicate job release time and deadline intervals. Red rectangles indicate where a job is scheduled, and highlighted time slots are active. 
The flow time drastically decreases from frame (i) to frame (ii), and either increases or barely decreases from frame (ii) to frame (iii).}
    \label{fig: new-slot}
\end{figure}

Overall, our work is applicable in the energy minimization setting when the cost of turning on a machine is low, but the cost of keeping it running is high. In such a setting, existing algorithms minimizing the total time the machine is on both for the active time problem and the busy time problem produce solutions with arbitrarily bad flow time. Agreeable deadlines and uniform length jobs are appropriate assumptions in this setting, as these assumptions are applicable to our real-world inspired motivations for this problem. In particular, uniform jobs are the central topic in batch scheduling models, which will be evident in the related work section, and agreeable deadlines can reasonably model customer wait time.

\subsection{Related Work}

In previous work, Baptiste provides a fundamental framework for using dynamic programming to study batch scheduling problems. 
In the serial batching model, the cost of a batch is the sum of the job lengths in the batch, and in the parallel batching model, the cost of the batch is the length of the longest job in the batch, where batches can have size at most $B$. Baptiste showed that in both models a large collection of objective functions are solvable in polynomial time with dynamic programming. For jobs of uniform length, Baptiste's dynamic program runs in time $O(B \cdot k^2 \cdot n^7)$ \cite{Baptiste00}. 
It is important to note that many of the works cited in the rest of this related work discussion are heavily influenced by this initial dynamic programming approach from Baptiste (see, for instance, \cite{Baptiste06, BaptisteChrobakDurr07, ChangGabowKhuller}).

The active time problem was introduced by Chang, Gabow, and Khuller \cite{ChangGabowKhuller}. They presented a greedy algorithm called Lazy Activation in the case when jobs have unit length. Lazy Activation minimizes the number of batches by delaying jobs; however, this potentially leads to a solution with very high flow time. We discuss this algorithm in more detail at the end of this section.
When jobs are pre-emptive but do not have unit processing time, Chang, Khuller, and Mukherjee gave an involved LP rounding based 2-approximation \cite{ChangKhullerMukherjee}. Subsequently, Kumar and Khuller gave a direct combinatorial 2-approximation algorithm \cite{KumarKhuller18}. It is worth noting that their algorithm also leads to solutions with arbitrarily bad flow time. 
Very recently, Calinescu and Wang introduce a new LP, for which they conjecture the integrality gap is 5/3. However, their rounding schemes can also only provably obtain a 2-approximation \cite{CalinescuWang21}. The active time problem has very recently been shown to be \textsf{NP}-complete by Saha and Purohit \cite{SahaPurohit21}.

When jobs have unit lengths, the active time problem can be viewed as a special case of the rectangle stabbing problem in computational geometry with capacity constrained lines. Even et. al. study this problem and use a DP method inspired by Baptiste that solves the problem in time $O(|S|^2|R|^2(|S|+R|))$, for $|R|$ the number of rectangles and $|S|$ the number of lines \cite{EvenLRSSS08}.  Other versions of this problem, with different objective functions, have also received attention \cite{ChanD0SW18, EGSV}.

When jobs have unit length, 
the problem of minimizing/ maximizing an objective subject to a budget of active time slots was studied under the name of the \emph{tall/small jobs problem}. In this setting, jobs are either considered "tall" and use all of the capacity available on a machine, or "small" and use 1 unit of available capacity.

Baptiste and Schieber study the unit length tall/small jobs problem with the objective of minimizing the maximum tardiness. They construct an LP for the problem whose solution must be integral, even though the LP is not totally unimodular \cite{BaptisteSchieber03}. On the other hand, D{\"{u}}rr and Hurand project the basic LP to a more compact LP of difference equations, which is shown to be totally unimodular \cite{DurrHurand06}. In general, the techniques in these papers cannot be adapted to handle the flow time objective, as the goal of minimizing flow time cannot be written as a feasibility problem.
Through the connection to the tall/ small jobs problem, we see that our techniques resemble those of Grandoni, Momke, and Wiese on the unsplittable flow on a path (UFP) problem  \cite{GMW21}.

The active time problem is very similar to the busy time problem, where the objective is to minimize the total amount of time that machines are on. In the literature, busy time is the non-preemptive version of the active time problem on an unbounded number of machines, as opposed to one machine in the active time problem.  
Approximation algorithms with small constant factors were given by Flammini et. al., Khandekar et. al., and Chang, Khuller, and Mukherjee with the best bound being a factor of 3 for the general case, and a factor of 2 for the case of interval jobs \cite{FMMS-busytime,KSST-busytime, ChangKhullerMukherjee}. 
Liu and Tang found constant factor approximation algorithms for the busy time problem on heterogeneous machines, where machines have different costs and different capacities \cite{LT21}.
Busy time was also studied in the online setting, both when machines are homogeneous and heterogeneous \cite{KoehlerKhuller17-busytime, LT21}. None of these works consider flow time as an objective -- note that minimizing flow time only makes sense in the general setting and not in the case of interval jobs.

Outside of the batch scheduling model, there are energy saving models that consider the activation cost of waking up a machine, when turning on the machine from a sleep state is very expensive. This is sometimes referred to as the power-down mechanism. 
Baptiste and then Baptiste, Chrobak, and D{\"u}rr study this model and show that on a single machine this problem is in $\textsf{P}$ when preemption and migration are allowed \cite{Baptiste06, BaptisteChrobakDurr07}. 
On multiple machines, again with preemption and migration, Demaine et. al. showed the Baptiste's dynamic program can be useful for obtaining a polynomial time algorithm on unit length jobs \cite{Baptiste06, DemaineGHSZ07}. 
Antoniadis, Garg, Kumar, and Kumar give the first constant factor approximation for scheduling arbitrary length jobs on $m$ machines \cite{AntoniadisGargKumarKumar20}.

Other energy minimization work is similar in spirit to ours, though the models are rather different. 
Baptiste showed that the problem of scheduling unit length jobs while minimizing the number idle periods is in $\textsf{P}$\cite{Baptiste06}.
Several works study minimizing energy consumption when consumption follows some power law \cite{AupyBDR11, SimonFalkMegowTeich20} or minimizing makespan given some energy budget \cite{PruhsSU05,BampisLL14}, and some of these works consider precedence-constrained models.
The Integrated Stockpile Evaluation problem seeks to minimize the number of calibrations, where a machine can only be used for a fixed period of length $T$ after it has been calibrated \cite{BenderBLMP13, FinemanSheridan15}.

The objective of minimizing flow time is one of the most important both in practice and theory. It is one of the most popular optimality criteria for scheduling on distributed batch systems and massively parallel processors \cite{Drozdowski09}. There has also been a surge of interest in studying weighted flow time on a single machine, where preemption is permitted. Batra, Garg, and Kumar made a breakthough on the problem in finding a pseudo-polynomial $O(1)$-approximation, which was then made polynomial by Feige, Kulkarni, and Li \cite{BatraGK18, FeigeKL19}. Even more recently, Rohwedder and Weise introduce a new method of attacking the problem, which leads to a $(2+\epsilon)$-approximation \cite{RohwedderW21}.

\subsection*{Lazy Activation}

Lazy Activation is an algorithm introduced by Chang, Gabow, and Khuller that minimizes the number of active slots needed to schedule a set of unit length jobs on single processor that can handle up to $B$ jobs at a time \cite{ChangGabowKhuller}. We briefly cover their algorithm here to motivate the need to develop new algorithms that take into account both flow time and active time. Lazy Activation first pre-processes all jobs so there are at most $B$ jobs with the same deadline. It then iterates: while there are unscheduled jobs remaining, Lazy Activation picks the unscheduled job $j$ with the earliest deadline and places an active slot $\tau$ as late as possible such that job $j$ can still be scheduled. It then schedules as many jobs as possible at $\tau$, choosing jobs using the earliest deadline first rule. 

Lazy Activation places active slots only when some unscheduled job is about to become infeasible. Simple examples show that Lazy Activation can be arbitrarily bad for minimizing flow time, e.g. consider one active slot for the job set containing $B$ unit jobs whose $(r_j, d_j) = (0, d)$ for some large $d$. In this case, the schedule returned by Lazy Activation has total flow time $B \cdot d$, while the optimal schedule has total flow time $B$. Lazy Activation can be run back-to-front by flipping release times and deadlines and treating the latest deadline as time 0, but this still results in solutions with arbitrarily bad flow time. For example, consider two active slots for $B-1$ jobs with $(r_j, d_j) = (0, d)$ and $1$ job with $(r_j, d_j) = (d - 1, d)$. In this case, Lazy Activation would schedule all jobs at time $d - 1$ and have total flow time $d(B - 1) + 1$ while the optimal schedule would schedule all but one job at time $1$ and have total flow time $B$.

\subsection{Our Contributions}

Our results are the first of to study the delicate balance between an active time budget and flow time, but they fit into the landscape of problems that consider energy efficiency without sacrificing an objective function. We believe the Pareto frontier between a budget of active time slots and an objective is understudied, and hope that this work will serve as inspiration for similar directions.

We introduce some notation and terminology to state our results. Assume jobs are ordered first by non-decreasing order of their release times and then by non-decreasing order of deadlines. Recall jobs have length $p \in \mathbb{Z}^+$. We refer to an instance of $\alpha \cdot p$, for $\alpha \in \{0,1,\ldots, k\}$, available active time slots with capacity $B$ and jobs $[j]$, for $j \in [n]$, equipped with their release times and deadlines with the shorthand $(\alpha,j)$. To keep notation more readable, we omit release times and deadlines from the shorthand, as well as the batch size. We may refer to $\alpha$ as the number of active batches. 
Our choice to refer to problems as $(\alpha,j)$ instead of $(k,n)$ in lemmas is to remind the reader that the properties we prove are true for all sub-problems of $(k,n)$, which is necessary since  all of our results use dynamic programming. Note that it is WLOG to assume that the number of active slots available is a multiple of $p$, i.e. the number of actiuve batches is integral.

At most $B$ jobs can be scheduled in an active time slot, where $B$ may be referred to as the \emph{batch size} or \emph{capacity}. Jobs cannot be scheduled in time slots that are not active.
We say that deadlines are \emph{agreeable} when for all jobs $i,j$, $d_i \leq d_j$ exactly when $r_i < r_j$. 
Recall that in the shuttle example, this would imply that if one rider arrives before another, the later rider would not be required to be picked up before the earlier one.  The additional assumption of agreeable deadlines is both practical and common in related literature. Theoretically it imposes additional structure on the problem which allows for a smaller search space for the DP to find an optimal solution. 

Our first result is for minimizing the flow time of $n$ unit length jobs with agreeable deadlines in $k$ active batches with capacity $B$.

\begin{thm}\label{main}
Let $(k,n)$ be an instance of unit length jobs with agreeable deadlines. Then one can either certify $(k,n)$ is infeasible or find a schedule for $(k,n)$ that minimizes flow time with a dynamic program in time $O(B \cdot k \cdot n)$ and space $O(k \cdot n)$.
\end{thm}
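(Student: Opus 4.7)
The plan is to set up a dynamic program on states $(i,j)$ where $f(i,j)$ is the minimum flow time to schedule jobs $1, \ldots, j$ in exactly $i$ active batches. I would first establish the structural fact that with agreeable deadlines, one may restrict to schedules in which each batch contains a contiguous range of jobs in the $(r_j, d_j)$-sorted order: given any schedule violating this, if batch $B_a$ at time $t_a$ contains a job $i$ and batch $B_b$ at time $t_b > t_a$ contains a job $j$ with $r_i > r_j$, then $d_i \geq d_j$ by agreeability, so swapping $i$ and $j$ preserves both release and deadline feasibility while leaving total flow time unchanged. I would also note that, given a fixed partition into contiguous batches, the flow-optimal placement is greedy: $t_i = \max(t_{i-1}+1, r_{j_i})$ where $j_i$ is the last job in batch $i$, because total flow time equals $\sum_i |B_i| \cdot t_i$ up to an additive constant depending only on the $r_j$'s.

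Next I would define $T^*(i, j)$ as the smallest time the $i$-th batch can have, over all partitions of $\{1, \ldots, j\}$ into $i$ batches of size at most $B$; this satisfies the recursion
\[
T^*(i, j) = \min_{b \in [1,\, \min(B,\, j-i+1)]} \max\bigl(T^*(i-1, j-b)+1,\ r_j\bigr),
\]
and can be filled in $O(kn)$ time. The DP transition is then
\[
f(i, j) = \min_{b} \Bigl[ f(i-1, j-b) + b \cdot (T^*(i, j) + 1) - \sum_{\ell = j - b + 1}^{j} r_\ell \Bigr],
\]
ranging over $b \in \{1, \ldots, B\}$ with $T^*(i-1, j-b)+1 \leq T^*(i, j)$ (so that the chosen $b$ is compatible with placing batch $i$ at $T^*(i, j)$) and $T^*(i, j) \leq d_{j-b+1}-1$ (the deadline check). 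The instance is certified infeasible when no chain of transitions reaches $(k, n)$; otherwise the answer is $\min_{k' \leq k} f(k', n)$, since using fewer batches is permitted. With $O(kn)$ states, $O(B)$ work per state, and $O(kn)$ space for $f$ and $T^*$, the total is $O(B \cdot k \cdot n)$ time and $O(k \cdot n)$ space.

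The hard part is showing this DP is correct, i.e., that restricting attention to schedules with $t_\ell = T^*(\ell, j_\ell)$ for every batch does not rule out the true optimum. I would prove by induction on $i$ the claim that for every $(i, j)$ there exists a flow-time-optimal schedule with $t_\ell = T^*(\ell, j_\ell)$ for all $\ell \leq i$. Starting from any optimal schedule with $t_i > T^*(i, j_i)$, the inductive step replaces batches $1, \ldots, i-1$ with an inductively optimal subschedule for $(i-1, j_i - |B_i|)$, which by the IH has $t_{i-1} = T^*(i-1, j_i - |B_i|)$; this weakly decreases both the cost of the first $i-1$ batches and the resulting candidate $t_i$. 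If $t_i$ still exceeds $T^*(i, j_i)$, a further exchange swaps the batch-$i$ size $|B_i|$ for the size $b^*$ achieving the min in the recursion for $T^*(i, j_i)$, and the net cost change is shown to be nonpositive by combining the monotonicity of $T^*$ and of $f$ in the second argument with the fact that release times are nondecreasing. The main delicacy is bounding the change in the first-$(i-1)$-batch cost when the batch-$i$ size changes, since this is exactly what determines whether the DP can safely reuse a single $T^*$ value rather than a Pareto frontier over $(f, t_i)$ pairs.
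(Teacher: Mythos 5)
Your overall architecture (contiguous batches via an exchange argument, a DP indexed by number of batches and a job prefix, $O(B)$ transitions) matches the paper, but the way you handle batch \emph{times} contains a genuine gap that you yourself flag without resolving. Your recursion prices batch $i$ at the single value $T^*(i,j)$, the earliest end time over \emph{all} partitions, while charging the prefix at $f(i-1,j-b)$, the flow optimum over \emph{all} partitions of $[j-b]$ into $i-1$ batches. These two optima need not be attained by the same prefix partition: the flow-optimal prefix may end strictly later than $T^*(i-1,j-b)$, in which case your compatibility test $T^*(i-1,j-b)+1 \leq T^*(i,j)$ passes but batch $i$ cannot actually start at $T^*(i,j)$ on top of that prefix, so the recursion undercounts. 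Symmetrically, the true optimum may use a $b$ your test excludes, accepting a later batch $i$ in exchange for a cheaper prefix. Making this sound seems to require either a Pareto frontier of (flow, end-time) pairs per state or a proof that some flow-optimal partition is simultaneously time-optimal for every prefix of batches; your inductive sketch ("replace the prefix with an inductively optimal subschedule... a further exchange... the net cost change is shown to be nonpositive") asserts exactly this but does not prove it, and it is the crux of the theorem.

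The paper removes the problem before it arises. Lemma \ref{preprocessing} shifts release times in $O(n)$ time so that at most $B$ jobs share any release time, and Lemma \ref{j-release-time} (via Claims \ref{less-jobs-opt}--\ref{less-slots-opt}) then shows that in the non-extraneous case every ordered optimal schedule places its last batch \emph{exactly} at $r_j$, with the chosen boundary $b$ satisfying $r_b < r_j$. Consequently batches never stack: each batch sits at the release time of its last job, all batch times are implicit, and no analogue of $T^*$ is needed -- the last-batch cost is just $\sum_{u=b+1}^{j}(r_j-r_u+1)$. If you want to salvage your route, the cleanest fix is to import that preprocessing step, after which your $T^*(i,j)$ collapses to $r_j$ and the compatibility condition becomes the simple index condition $b \leq i_j$ used in the paper; otherwise you must prove the Pareto-collapse claim, which is essentially the content of Lemma \ref{j-release-time} in disguise.
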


Additionally, we can add a parameter allowing us to only complete a subset of size $m  \leq n$ jobs. We consider $m$ as part of the input, and we wish to choose the $m$ jobs that will be completed such that flow time is minimized out of all $\binom{n}{m}$ possible choices. 
One instance where this generalization is interesting is when it is impossible to schedule all jobs in $[n]$ in $k$ active batches, so one considers a large subset instead. We present the extension of Theorem \ref{main} to completing a subset of jobs in Section \ref{sec: thm1}, and the runtime and space of the corresponding DP increase by a factor of  $n-m$.

Generalizing to non-unit jobs, we assume all jobs have length $p$. Here, there is less structure to use for the problem, which forces us to consider more possibilities for where to schedule batches. 
We obtain the following.

\begin{thm}\label{uniform}
Let $(k,n)$ be an instance of uniform length jobs with agreeable deadlines. Then one can either certify $(k,n)$ is infeasible or find a schedule for $(k,n)$ that minimizes flow time with a dynamic program in time $O(B \cdot k \cdot n^5)$ and space $O(k \cdot n^3)$.
\end{thm}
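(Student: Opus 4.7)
The plan is to extend the dynamic programming approach of Theorem~\ref{main} to the uniform-length setting. The new difficulties are that each batch now spans $p$ consecutive time slots, so its exact start time becomes an explicit decision, and that consecutive batches may be separated by idle time if a later batch is waiting for a release; this forces the DP to carry a time parameter that the unit-length DP did not need.

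First, I would prove two structural lemmas for optimal schedules. The \emph{contiguity} lemma says that, with agreeable deadlines, there is an optimal schedule whose batches partition the jobs into contiguous ranges under the index ordering. This follows from a standard swap argument: a batch ending at time $C$ with job set $S$ contributes $|S| \cdot C - \sum_{\ell \in S} r_\ell$ to the flow time, so exchanging two out-of-order jobs across batches preserves the flow time, and agreeable deadlines make such swaps feasibility-preserving. The \emph{canonical-position} lemma says that each batch may WLOG start either at the release time of one of its own jobs or immediately at the end of the preceding batch. Iterating, every batch start time belongs to the set $\{r_i + c p : i \in [n],\ c \in \{0, 1, \ldots, k-1\}\}$, a set of size $O(nk)$.

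With these lemmas, I would define $T[\alpha, j, \tau]$ as the minimum flow time of a schedule for jobs $1, 2, \ldots, j$ using $\alpha$ batches, where the last batch ends at canonical time $\tau$. The state space has size $O(k \cdot n^3)$. By the contiguity lemma, the last batch contains jobs $j-b+1, \ldots, j$ for some $1 \leq b \leq B$, and the batch is feasible iff $r_j \leq \tau - p$ and $d_{j-b+1} \geq \tau$. The recurrence is
\[
T[\alpha, j, \tau] \;=\; \min_{b,\, \tau'} \Bigl\{ T[\alpha - 1,\, j - b,\, \tau'] \;+\; b\tau \;-\; \sum_{\ell = j-b+1}^{j} r_\ell \Bigr\},
\]
where $\tau'$ ranges over canonical positions with $\tau' \leq \tau - p$. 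With prefix sums of release times precomputed in $O(n^2)$, each transition evaluation is $O(1)$, and the number of $(b,\tau')$ transitions per state is $O(B \cdot n^2)$, yielding total runtime $O(B \cdot k \cdot n^5)$.

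The main obstacle will be proving the canonical-position lemma rigorously: a naive ``push a single batch leftward'' argument fails because shifting one batch may collide with a preceding one. The correct proof is an inductive left-to-right sweep that processes batches in time order, advancing each one to its earliest feasible start given the fixed positions of preceding batches, and verifies that the sweep respects all deadlines and does not increase flow time. A secondary technicality is ensuring that the effective number of feasible $\tau$-values per state is $O(n^2)$, which follows directly from the $O(nk)$ canonical positions bound. Correctness of the DP then follows by induction on $\alpha$, and infeasibility is detected by checking whether $\min_\tau T[k, n, \tau] = \infty$.
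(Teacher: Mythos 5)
Your proposal follows essentially the same route as the paper: an ordered/contiguous-batch lemma via swaps under agreeable deadlines, restriction of batch start times to the set $\{r_i + c\cdot p\}$ of size $O(n^2)$, and a DP over states $(\alpha, j, \tau)$ that peels off the last batch $[j-b+1,j]$ and minimizes over the split point and the previous batch's time, giving the same $O(B\cdot k\cdot n^5)$ time and $O(k\cdot n^3)$ space. The canonical-position argument you flag as the main obstacle is exactly the "interesting times" restriction the paper inherits from Baptiste, and your sketched left-to-right sweep is the right way to establish it.
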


After seeing how to handle subsets in the unit length case, it is not hard to extend the uniform length setting to also handle subsets, and we present this extension in Section \ref{sec: uniform}.

Lastly, we show that our problem with arbitrary, i.e. potentially non-agreeable, deadlines can be solved by augmenting a dynamic program of Baptiste to include an active time budget. While this result is more general, its runtime is much worse. However, it is not hard to see that this can be slightly improved for unit jobs.

\begin{thm}\label{thm:Baptiste-ext}
Let $(k,n)$ be an instance of uniform length jobs. Then one can either certify $(k,n)$ is infeasible or find a schedule for $(k,n)$ that minimizes flow time with a dynamic program in time $O(B \cdot k^2 \cdot n^7)$ and space $O(B \cdot k \cdot n^5)$.
If $p=1$, then the runtime can be reduced to $O(B \cdot k^2 \cdot n^4)$ and the space complexity to $O(B \cdot k \cdot n^3)$.
\end{thm}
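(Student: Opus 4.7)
The plan is to adapt Baptiste's classical dynamic program for uniform-length batch scheduling \cite{Baptiste00} by incorporating the active-batch budget $k$ as an explicit state parameter. Baptiste's DP operates on sub-problems indexed by a pair of jobs $(i,j)$ (which encode a contiguous job subset ordered by release times and deadlines), two time stamps drawn from a canonical set of $O(n^2)$ candidate batch start times of the form $\{r_h + p\cdot \ell : h \in [n], \ell \in [n]\}$, plus an auxiliary completion-count index used to track which jobs are processed in the window. Our problem—minimize flow time subject to at most $k$ active batches—is not exactly the objective Baptiste originally targeted, but it fits into his framework once the batch count is tracked as a state dimension.

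First, I would recall Baptiste's canonical-schedule lemma: any optimal schedule can be transformed, without increasing flow time and without changing its number of batches, into one whose batches all start at canonical times. The key point is that these exchange arguments only shift batches in time rather than splitting or merging them, so the \emph{cardinality} of the set of batches is invariant under canonicalization. Consequently the canonical-schedule property extends verbatim to the budgeted version of the problem. Next, I would define an augmented table $f(i,j,t,t',c,k')$ that mirrors Baptiste's recurrence, where $k' \in \{0,1,\ldots,k\}$ counts the batches allocated within the current sub-problem. Each recurrence step splits the sub-problem into two disjoint sub-problems over disjoint intervals, choosing a split point $k_1 + k_2 = k'$ for the budget and taking the sum of the two recursive values. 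The answer is then $f(1,n,0,T,\cdot,k)$, where $T$ is the latest deadline, and the instance is reported infeasible if this minimum is $\infty$.

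The main obstacle is verifying that Baptiste's structural and dominance lemmas remain valid under a hard cap on the number of batches; in particular, one must rule out that enforcing $k' \le k$ pushes an optimal solution off the canonical time grid or requires re-aggregating batches in a way Baptiste's arguments do not already cover. As above, this follows because canonicalization changes only the timing of batches, never their count. Accounting for the new dimension, the extra $k$ factor from tracking the budget in the state plus another factor of $k$ from partitioning the budget among sub-problems in each transition yields the claimed $O(B \cdot k^2 \cdot n^7)$ time and $O(B \cdot k \cdot n^5)$ space bounds. Finally, for $p=1$, each batch occupies a single time slot and the canonical batch-time set collapses to $O(n)$ distinct release times; substituting into the state enumeration removes a factor of $n^2$ from storage and an additional factor of $n$ from the transition over batch contents, giving the refined $O(B \cdot k^2 \cdot n^4)$ time and $O(B \cdot k \cdot n^3)$ space bounds stated in the theorem.
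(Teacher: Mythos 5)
Your overall strategy is the same as the paper's (Section \ref{Baptiste-sec}): keep Baptiste's interval decomposition over the canonical time set $T=\{r_h+p\cdot \ell\}$ of size $O(n^2)$, add the active-batch budget as a state dimension, and split that budget between the two sub-intervals at each transition, which accounts for the extra factor of $k^2$ in the running time. Two points need repair, however. First, the state you write down, $f(i,j,t,t',c,k')$ with a \emph{pair} of job indices, has $\Omega(B\cdot k\cdot n^6)$ entries once $t,t'$ range over $T$, which overshoots the claimed $O(B\cdot k\cdot n^5)$ space. The paper's table is $\textsf{OPT}(t_l,t_r,\mu_r,\alpha,j)$ with a \emph{single} job index: the job set of a sub-problem is $U_j(t_l,t_r)=\{j'\le j : r_{j'}\in(t_l,t_r]\}$, so it is determined by $j$ together with the interval, and $\mu_r\in\{0,\dots,B\}$ records only the residual capacity of the rightmost batch. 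That gives $O(B\cdot k\cdot n\cdot|T|^2)$ states and an $O(k\cdot|T|)$ transition (choose the new batch start $t$ and the budget split $\alpha_1+\alpha_2=\alpha$), yielding the stated bounds.

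Second, and more substantively, the $p=1$ refinement does not follow merely from ``each batch occupies a single slot.'' Setting $p=1$ in $T=\{r_h+p\cdot\ell\}$ still leaves $O(n^2)$ canonical times. The collapse to the $n$ release times requires (a) first making the instance $B$-capacity compatible via the preprocessing of Lemma \ref{preprocessing}, so that no release time carries more than $B$ jobs, and (b) an exchange lemma (stated at the start of Section \ref{Baptiste-sec}) showing that for such instances every optimal schedule places active slots only at release times; its proof shifts the earliest offending batch left to the latest release time among its jobs and cascades any overflow leftward, terminating precisely because of (a). Without the preprocessing your claim is false: $B+1$ unit jobs all released at time $0$ require two batches, one of which must occupy a non-release-time slot. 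With these two fixes your argument matches the paper's.
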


With arbitrary, i.e. potentially non-agreeable, deadlines, there may be computational barriers to obtaining algorithms with runtime comparable to ours in the agreeable deadlines setting. We discuss these difficulties more in Section \ref{sec: conclusion}. 

Our results are summarized in the table. As expected, the runtime and space complexity are best when there is more structure in the setting to use, i.e. unit length jobs or agreeable deadlines. The most structured settings require more technical care in order to fully utilize this additional structure, while the less structured settings need to use more brute force.

We also discuss the contributions of our techniques.
First, the fact that dynamic programming is useful for our problem was non-obvious, a priori. In the active time literature, combinatorial, LP rounding, and dynamic programming algorithms have all been useful to various extents. Second, it is important to note that our dynamic programs for agreeable jobs are \emph{not} generalizations or specializations of the dynamic program in Baptiste’s highly influential batch scheduling work. Baptiste's DP looks over all time intervals $(t_l, t_r)$, keeps track of how much each slot is ‘filled’ so far, and must schedule jobs one at a time. For details on Baptiste's framework, see Section \ref{Baptiste-sec}. Our DP, on the other hand, keeps track of the last time slot that a job is scheduled in and schedules jobs in consecutive batches, filling an entire slot at once.  Our hope is that our DP framework can be extended to more general settings, thus beating the runtime of Baptiste’s DP in more general settings. We only augment Baptiste’s DP to include an active time budget in the non-agreeable jobs setting, and this is more so to exemplify that our problem is in $\textsf{P}$ for non-agreeable jobs and to provide a baseline runtime and space complexity.

\vspace{.5cm}
\begin{center}
\begin{tabular}{ |m{1.8cm}||m{2.6cm}|m{2.7cm}| } 
 \hline
  job lengths $\rightarrow$ 
  deadlines  $\downarrow$   &   {\color{white} fillllll ll} unit &  {\color{white} filll llll} uniform  \\ 
 \hline
 \hline
    {\color{white} fil}agreeable & \vspace{2mm} Theorem \ref {main}, Section~\ref{sec: thm1}   runtime: $O(B \cdot k \cdot n)$  space: $O(k \cdot n)$   {\color{white} filllll}
  + subset ext. (Thm \ref{subset})  
   &  \vspace{2mm} Theorem \ref {uniform}, Section \ref{sec: uniform}
   runtime: $O(B \cdot k \cdot n^5)$  space: $O( k \cdot n^3)$ {\color{white} filll l}
     + subset ext. (Thm \ref{uniform-subset})  \\
  \hline
  {\color{white} fil} arbitrary &  \vspace{2mm} Theorem \ref {thm:Baptiste-ext}, Section \ref{Baptiste-sec} runtime: $O(B \cdot k^2 \cdot n^4)$ \qquad space: $O( B \cdot k \cdot n^3)$ 
   &  \vspace{2mm} Theorem \ref {thm:Baptiste-ext}, Section \ref{Baptiste-sec}   runtime: $O(B \cdot k^2 \cdot n^7)$ \qquad space: $O(B\cdot  k \cdot n^5)$  \\ 
 \hline
\end{tabular}
\end{center}
\vspace{.5 cm}

\section{Unit jobs, agreeable deadlines}\label{sec: thm1}
Here, we work towards proving Theorem \ref{main}, and presenting a dynamic program for the case when jobs have unit lengths and agreeable deadlines.
We begin by formalizing more definitions. 
An instance $(\alpha,j)$ is \emph{feasible} when it is possible to schedule all jobs in $[j]$ in $\alpha$ active time slots, respecting release times and deadlines. Note here since jobs are unit, the number of active time slots is the same as the number of active batches.
When $(\alpha,j)$ has unit length jobs and $B \cdot \alpha < j$, it is clearly an infeasible instance, as only $\alpha \cdot B$ jobs can be scheduled in $\alpha$ time slots with batch size $B$. It is also possible to have infeasible instances with $B \cdot \alpha \geq j$, but we will handle these as base cases in the DP formulation.
When $(\alpha,j)$ is feasible, then
$\textsf{OPT}(\alpha,j)$ is finite and denotes the minimum flow time of any optimal schedule that uses at most $\alpha$ time slots to schedule job set $[j]$. Additionally, for feasible $(\alpha,j)$, take  $\mathcal{S}(\alpha,j)$ to be the set of optimal schedules achieving flow time $\textsf{OPT}(\alpha,j)$. When $(\alpha,j)$ is infeasible, we let
$\textsf{OPT}(\alpha,j) = \infty$. 

We call an instance $(\alpha,j)$ \emph{extraneous} 
if the number of distinct release times of the jobs $[j]$ is at most $\alpha$, otherwise the instance is called \emph{non-extraneous}. Note that if there are no release times with more than $B$ jobs released at that time, and there are at most $\alpha$ distinct release times, then we can place an active slot at each distinct release time and schedule each job at its release time. The cost of this schedule is $j$.
When there are no jobs to be scheduled in an instance, notationally we let $j=0$. We define a \emph{B-capacity compatible} instance as one in which no more than $B$ jobs have the same release time, and will show that any instance of jobs and their accompanying release times can either be transformed into an equivalent \emph{B-capacity compatible} instance, or is infeasible.

Now, we can start discussing the structure of optimal solutions in this setting. As a reminder, we begin with a set of jobs that have been sorted first by release time and then by deadline. We will refer to the index of a job as its position in this ordering. We will show that a set of jobs and their accompanying release times can be pre-processed into an equivalent set of jobs that is $B$-capacity compatible.
Intuitively, this can be done incrementally by shifting the release times to the right so that no time slot ever contains more than $B$ release times.

The following lemma includes a slight abuse of notation. An instance $(\alpha,j)$ is inherently equipped with release times and deadlines. We alter the release times of jobs, keeping the rest of the jobs' information the same, and refer to the altered instance as $(\alpha,\widetilde{j}${} $)$, where the use of {} $\widetilde{\cdot}$ {} indicates that the jobs have new release times.

\begin{lemma}\label{preprocessing}
Let $(\alpha,j)$ be an instance of unit length jobs with agreeable deadlines.  In time $O(j)$, $(\alpha,j)$ can be transformed into a $B$-capacity compatible instance  $(\alpha,\widetilde{j}${} $)$ of the same number of unit jobs but with release times such that \textbf{(i)} at most $B$ jobs have the same release time and \textbf{(ii)} a schedule for $(\alpha,\widetilde{j}${} $)$ with minimal flow time is also a schedule for $(\alpha,j)$ with minimal flow time, with  $(\alpha,\widetilde{j}${} $)$ being infeasible if and only if  $(\alpha,j)$ is infeasible.
\end{lemma}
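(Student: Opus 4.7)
The plan is to build $(\alpha, \widetilde{j}${} $)$ with a single left-to-right scan that enforces the $B$-capacity condition by \emph{only} pushing release times forward. I would first sort the jobs by $(r_j, d_j)$, so that within a common release time the earliest deadlines come first, and then sweep with a cursor $(t, b)$ where $t$ is the effective release time currently being filled and $b$ counts jobs already assigned to it. For each job in order, update $t \leftarrow \max(t, r_j)$ (resetting $b = 0$ if $t$ strictly increases); then, if $b = B$, advance $t \leftarrow t+1$ and reset $b \leftarrow 0$; finally set $\widetilde{r}_j := t$ and increment $b$. Each job takes $O(1)$ work for total runtime $O(j)$, and property (i) holds by construction. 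Agreeability is preserved as well, since within a common original release time jobs are handled in deadline order and the assigned $\widetilde{r}$ values are monotone in the processing order, so any pair with $\widetilde{r}_i < \widetilde{r}_j$ still satisfies $d_i \le d_j$.

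Because $\widetilde{r}_j \ge r_j$ and deadlines are unchanged, any schedule feasible for $(\alpha, \widetilde{j}${} $)$ is immediately feasible for $(\alpha, j)$. The reverse direction and the optimality correspondence both reduce to one exchange argument. At each original release time $t$ shared by $c > B$ jobs, designate the $c - B$ jobs with \emph{latest} deadlines as those to push forward---exactly the jobs the scan shifts. Given any feasible (or optimal) schedule for $(\alpha, j)$, whenever a designated job $j^\star$ occupies slot $t$ while a non-designated job $i$ sits at some $t' > t$, I would swap them: job $i$ at $t$ is trivially valid, and job $j^\star$ at $t'$ is valid because $d_{j^\star} \ge d_i \ge t'$, with the first inequality using that $j^\star$ has one of the latest deadlines at $t$ and $i$ does not. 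The swap preserves the multiset of completion times and hence the flow time. Iterating and cascading through successive release times produces a schedule valid for $(\alpha, \widetilde{j}${} $)$ with the same flow time, which yields the reverse feasibility direction and shows $\textsf{OPT}(\alpha, j)$ is attained by a schedule also feasible for $(\alpha, \widetilde{j}${} $)$. On the set of schedules feasible for both instances the two flow-time objectives differ by the constant $\sum_j(\widetilde{r}_j - r_j)$, so their argmin sets coincide there; combined with the previous point, a minimum-flow-time schedule for $(\alpha, \widetilde{j}${} $)$ is one for $(\alpha, j)$ as well, and infeasibility transfers symmetrically.

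The main obstacle is making the exchange argument \emph{deadline-safe}: a careless swap can push a tight-deadline job past its deadline. Choosing precisely the latest-deadline jobs at each release time as the ones to push forward is the rule that sidesteps this, since every swap ever performed sends a smaller-deadline job to an earlier slot (trivially fine) and a larger-deadline job to a slot whose feasibility is already witnessed by a job of smaller deadline. The agreeable deadline hypothesis is what makes this designation clean---it ensures the designated set is a suffix of the sorted list at each release time and that the resulting instance $(\alpha, \widetilde{j}${} $)$ remains agreeable---so downstream DP arguments can freely assume $B$-capacity compatibility.
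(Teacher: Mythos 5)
Your proposal is correct and follows essentially the same route as the paper: the same left-to-right sweep that pushes the overflow of latest-deadline (equivalently, largest-index) jobs to the next slot, with feasibility of $(\alpha,\widetilde{j}\,)$ implying feasibility of $(\alpha,j)$ trivially because release times only increase. Your version is in fact somewhat more careful than the paper's, which compresses the reverse direction into the single observation that at most $B$ jobs fit in a slot anyway; your explicit deadline-safe exchange argument and the remark that the two flow-time objectives differ by the constant $\sum_j(\widetilde{r}_j-r_j)$ on the common feasible set fill in details the paper leaves implicit.
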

\begin{proof}
Fix instance $(\alpha,j)$.
We shift the release times of jobs from left to right such that all time slots we have passed over so far respect the capacity constraint. For $t \geq 1$, let $N_t$ be the number of jobs with release time $t$.
If $N_1 \leq B$ jobs, move to time slot 2.
Otherwise $N_1>B$ jobs are released at time $1$; take the $N_1-B$ jobs of largest index with release time 1, and move their release time to time slot 2, updating $N_2 \leftarrow N_2 + N_1-B$. Now time slot 1 has no more than $B$ jobs released there.
Continue this for time slot $t >1$, 
leaving the time slot alone if it has no more than $B$ jobs released there, and otherwise incrementing the release time by 1 for the $N_t-B$ largest indexed jobs with release time $t$. If the release time of a job $j$ is incremented past its deadline, then there must have been some interval of length $\ell$ such that more than $\ell \cdot B$ jobs needed to be completed within that interval, implying that the original instance was infeasible. The process terminates when we hit a time $t$ such that $t$ has no more than $B$ jobs and and no job has release time greater than $t$.
Let this new instance be $(\alpha,\widetilde{j}${} $)$, where jobs are equipped with their new release times and their same deadlines.

From construction, it is clear that in $(\alpha,\widetilde{j}${} $)$ no more than $B$ jobs are released at any given time, so $(\alpha,\widetilde{j}${} $)$ is indeed $B$-capacity compatible. Any schedule feasible/ infeasible for $(\alpha,\widetilde{j}${} $)$ is also feasible/ infeasible for $(\alpha,j)$, as the release times only increased from $(\alpha,j)$ to $(\alpha,\widetilde{j}${} $)$ and---as mentioned above---if a release time increased to beyond its deadline, then there were too many jobs that needed to be scheduled in $[r_j,d_j]$ in the original instance. Additionally, since no more than $B$ jobs can be scheduled in a time slot anyway, a feasible schedule with minimal flow time for $(\alpha,\widetilde{j}${} $)$ is also optimal for $(\alpha,j)$.
\end{proof}

We will assume the above pre-processing is done on the input of $n$ jobs, thus all $(\alpha,j)$ sub-problems are assumed to be $B$-capacity compatible in only $O(n)$ time.

Any valid schedule for jobs $[j]$ using $\alpha$ time slots will choose a job set of size at most $B$ that includes job $j$ and schedule this set in 1 time slot, while scheduling the remaining jobs in at most $\alpha-1$ time slots. The next lemma shows that among the schedules with minimum flow time, for feasible $(\alpha,j)$ there exists a schedule $S \in \mathcal{S}(\alpha,j)$ such that the order of the jobs corresponds to the order of the times at which the jobs are scheduled. We call such a schedule $S$ \emph{ordered}.

\begin{lemma}\label{staircase}
Let $(\alpha,j)$ be a feasible instance of  uniform jobs with agreeable deadlines. There exists an
 ordered schedule $S \in \mathcal{S}(\alpha,j)$, i.e.
for all $i,\ell \in [j]$ with $i < \ell$, if job $i$ is scheduled at time $t_i$ and job $\ell$ is scheduled at time $t_{\ell}$ in $S$, then $t_i \leq t_{\ell}$. It follows that the set of jobs scheduled in the same time slot as $j$ in $S$ is of the form
$[b+1,j]$ for $b \in [ j-B,j-1 ]$.
\end{lemma}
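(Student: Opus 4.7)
The plan is to prove the first part via an exchange argument on any optimal schedule $S^* \in \mathcal{S}(\alpha,j)$. Call a pair of indices $i < \ell$ an \emph{inversion} in $S^*$ if their scheduled batch times satisfy $t_i > t_\ell$. I will show that any single inversion can be resolved by swapping the batch assignments of jobs $i$ and $\ell$ without violating feasibility and without changing the total flow time; iterating until no inversion remains yields an ordered $S \in \mathcal{S}(\alpha,j)$.

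The key step is to verify feasibility and flow-time invariance of a single swap. Agreeable deadlines give $r_i \leq r_\ell$ and $d_i \leq d_\ell$, and synchronized batching forces $t_i \geq t_\ell + p$ whenever $t_i > t_\ell$ (distinct batches cannot overlap). Job $i$ fits in the batch at $t_\ell$ because $r_i \leq r_\ell \leq t_\ell$ and its new completion time satisfies $t_\ell + p \leq t_i + p \leq d_i$; job $\ell$ fits at $t_i$ because $r_\ell \leq t_\ell < t_i$ and $t_i + p \leq d_i \leq d_\ell$. Batch sizes are unchanged by a swap, so the capacity constraint is preserved. Flow time is invariant because $(t_i + p - r_i) + (t_\ell + p - r_\ell) = (t_\ell + p - r_i) + (t_i + p - r_\ell)$, so the new schedule remains in $\mathcal{S}(\alpha,j)$.

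Termination uses the monovariant $\Phi(S) := \sum_{j'} j' \cdot t_{j'}$: each inversion-resolving swap strictly increases $\Phi$ by $(\ell - i)(t_i - t_\ell) > 0$, and $\Phi$ is bounded in terms of the largest deadline, so only finitely many swaps can occur. The resulting schedule is ordered and still lies in $\mathcal{S}(\alpha,j)$. The structural claim for $j$'s batch then follows immediately from orderedness: letting $J := \{i \in [j] : t_i = t_j\}$, any $i' \in [j]$ with $i \leq i' \leq j$ for some $i \in J$ satisfies $t_j = t_i \leq t_{i'} \leq t_j$, so $i' \in J$. Hence $J = [b+1, j]$ is a contiguous suffix of $[j]$, and the capacity bound $|J| \leq B$ together with $j \in J$ gives $b \in [j-B,\, j-1]$.

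The main obstacle I anticipate is getting the deadline side of swap-feasibility right for $p > 1$: moving $i$ into $\ell$'s batch requires $t_\ell + p \leq d_i$, which does not follow from $t_\ell \leq d_i$ alone. The synchronized-batching inequality $t_\ell + p \leq t_i$ combined with $i$'s original deadline feasibility $t_i + p \leq d_i$ is precisely what closes this gap; for $p=1$ the check degenerates to the familiar adjacent-slot exchange, but I want the argument to handle general $p$ uniformly.
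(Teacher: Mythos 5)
Your proposal is correct and follows essentially the same route as the paper's proof: an exchange argument that repeatedly swaps an out-of-order pair $i<\ell$, using $r_i \leq r_\ell$ and $d_i \leq d_\ell$ for feasibility and the symmetry of the flow-time sum for optimality, then deducing contiguity of $j$'s batch from orderedness and the capacity bound. Your two refinements---the explicit chain $t_\ell + p \leq t_i + p \leq d_i$ handling the deadline check for general $p$, and the potential $\Phi$ guaranteeing termination---are welcome tightenings of steps the paper treats more briefly (it asserts feasibility directly and bounds the process by $j^2$ swaps), but they do not change the argument.
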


\begin{proof}
Take any schedule $S' \in \mathcal{S}(\alpha,j)$. 
If we cannot return $S \leftarrow S'$ as an ordered schedule,
then there are jobs $i$ and $\ell$, where $i < \ell$ is scheduled at time $t_i >t_{\ell}$ in $S'$. Let $S'^{(1)}$ be the schedule obtained from taking $S'$ and swapping $i$ and $\ell$, i.e. scheduling $i$ at time $t_{\ell}$ and $\ell$ at time $t_i$. 
First note that this $S'^{(1)}$ is feasible, as $i$ might move to an earlier time slot, but $t_{\ell}$ is feasible for $i$ since it was feasible for $\ell$ and $r_{i} \leq r_{\ell}$. Similarly, $\ell$ might move to a later time slot, but $t_i$ is feasible for $\ell$ as $d_i \leq d_\ell$.
Additionally, the flow time did not increase after swapping the two jobs.
If there are no more pairs of jobs whose scheduled times are out of order from their index, return $S \leftarrow S'^{(1)}$. Otherwise, swap a pair of jobs that are out of order to obtain $S'^{(2)}$. Continuing this for at most $j^2$ swaps will result in an ordered schedule $S \in \mathcal{S}(\alpha,j)$.

Since each time slot has capacity at most $B$ and $S$ is ordered, the set of jobs scheduled at the same time slot as $j$ in $S$ is of the form
$[b+1,j]$ for $b \in [j-B,j-1 ]$.
\end{proof}

A version of the previous lemma was also observed by Baptiste. See Proposition 1 in Baptiste's paper \cite{Baptiste00}.

Next, we work towards our main structural lemma for Theorem \ref{main}, which is Lemma \ref{j-release-time}. In part, this lemma proves that the active times will be a subset of the release times. It also shows that if $(\alpha,j)$ is $B$-capacity compatible, then there is no reason to schedule beyond time slot $r_j$.
The next three statements will be helpful in proving  Lemma \ref{j-release-time}.
Recall that $\textsf{OPT}(\alpha,j)$ is the flow time of any $S \in \mathcal{S}(\alpha,j)$.

\begin{claim}\label{less-jobs-opt}
Fix $\alpha \geq 1$.
For a feasible instance $(\alpha,j)$ of unit length jobs,
$\textsf{OPT}(\alpha,i) \leq \textsf{OPT}(\alpha,j)- (j-i)$ for all $i \leq j$.
\end{claim}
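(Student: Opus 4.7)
The plan is to take an optimal schedule $S \in \mathcal{S}(\alpha, j)$ for the larger instance and restrict it to the jobs $[i]$, arguing that the flow time drops by at least $j - i$.

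More concretely, first I would fix any $S \in \mathcal{S}(\alpha, j)$ and let $C_\ell$ denote the completion time assigned to job $\ell$ in $S$, so that $\textsf{OPT}(\alpha, j) = \sum_{\ell = 1}^{j} (C_\ell - r_\ell)$. Let $S'$ be the schedule obtained from $S$ by simply deleting the jobs $\{i+1, \ldots, j\}$ from their assigned slots; each remaining job keeps its original completion time. Then $S'$ uses a subset of the active slots used by $S$, hence at most $\alpha$ active slots, and it still respects release times and deadlines for every job in $[i]$, so it is a feasible schedule for $(\alpha, i)$. In particular, $(\alpha, i)$ is feasible.

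Next, I would bound the flow time of $S'$. By construction,
\[
\textsf{flow}(S') \;=\; \sum_{\ell=1}^{i} (C_\ell - r_\ell) \;=\; \textsf{OPT}(\alpha, j) \;-\; \sum_{\ell = i+1}^{j} (C_\ell - r_\ell).
\]
Since every job has unit length and must be scheduled in an active slot within its release--deadline window, $C_\ell \geq r_\ell + 1$ for each $\ell \in [j]$. Consequently $\sum_{\ell=i+1}^{j}(C_\ell - r_\ell) \geq j - i$, which gives $\textsf{flow}(S') \leq \textsf{OPT}(\alpha, j) - (j - i)$.

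Finally, since $\textsf{OPT}(\alpha, i)$ is the minimum flow time over all feasible schedules for $(\alpha, i)$ and $S'$ is one such schedule, $\textsf{OPT}(\alpha, i) \leq \textsf{flow}(S') \leq \textsf{OPT}(\alpha, j) - (j - i)$. There is no real obstacle here; the only subtlety is to be explicit about why the restriction $S'$ remains feasible (it inherits feasibility slot-by-slot from $S$ because removing jobs cannot violate the batch capacity $B$, release times, or deadlines) and to invoke the unit-length assumption when lower-bounding each deleted job's contribution by $1$.
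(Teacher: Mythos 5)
Your argument is correct and is essentially identical to the paper's proof: both restrict an optimal schedule for $(\alpha,j)$ to the jobs $[i]$, observe that the restriction remains feasible with at most $\alpha$ active slots, and use the unit-length assumption to charge each deleted job at least $1$ unit of flow time. Your write-up simply spells out the feasibility of the restriction and the lower bound $C_\ell - r_\ell \geq 1$ in more detail than the paper does.
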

\begin{proof}
Consider a schedule $S \in \mathcal{S}(\alpha,j)$. Let $S'$ be $S$ restricted to jobs $[i]$. The cost of $S'$ is at most $\textsf{OPT}(\alpha,j) - (j-i)$, as each job in $[i + 1, j]$ contributes flow time at least 1 in $S$. On the other hand, since $S'$ is a feasible schedule for $[i]$ using at most $\alpha$ time slots, the cost of $S'$ is at least $\textsf{OPT}(\alpha,i)$.
\end{proof}

In our next few statements and proofs, it will be helpful to index the active time slots as $\tau_1,\ldots,\tau_{\alpha}$ for a non-extraneous instance.

\begin{claim}\label{last-time-slot}
Fix $\alpha >1$.
For a feasible, non-extraneous instance $(\alpha,j)$ of unit length jobs, 
let $S \in \mathcal{S}(\alpha,j)$ be a schedule with $\tau_{\alpha-1} < r_j$. Then $S$ has its last active time slot $\tau_\alpha$ at $ r_j$. 
\end{claim}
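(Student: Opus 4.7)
The plan is to argue by contradiction. First I would observe that since the only active slots strictly before $\tau_\alpha$ are $\tau_1 < \cdots < \tau_{\alpha-1} < r_j$, job $j$ cannot be scheduled in any of them (it is only releasable at time $r_j$ or later), so $j$ must be scheduled at $\tau_\alpha$. In particular $\tau_\alpha \geq r_j$. Assume for contradiction that $\tau_\alpha > r_j$.

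The key step is to show that the entire batch at $\tau_\alpha$ can be shifted earlier to time $r_j$. Let $J_\alpha \subseteq [j]$ denote the set of jobs scheduled at $\tau_\alpha$ in $S$. For every $i \in J_\alpha$, the index bound $i \leq j$ together with the standing convention that jobs are sorted by non-decreasing release time gives $r_i \leq r_j$. For the deadline side, the fact that $S$ schedules $i$ at $\tau_\alpha$ requires $d_i \geq \tau_\alpha$, and we assumed $\tau_\alpha > r_j$, so $d_i > r_j$. Hence placing any $i \in J_\alpha$ at time $r_j$ respects both its release time and its deadline.

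Then I would construct a modified schedule $S'$ from $S$ by deleting the active slot at $\tau_\alpha$ and instead placing all of $J_\alpha$ at a new active slot at time $r_j$. Because $r_j > \tau_{\alpha-1}$ by hypothesis, this is a fresh slot rather than a collision with an existing one, so $S'$ still uses exactly $\alpha$ active slots, each containing at most $B$ jobs (in fact the same batches except the last one is relocated). The change in total flow time is $-|J_\alpha|\,(\tau_\alpha - r_j) < 0$, contradicting $S \in \mathcal{S}(\alpha,j)$.

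The only nontrivial point is checking the deadline constraint after the shift; this is where I expect a careful reader to pause. Happily, we do not need to invoke the agreeable-deadlines property for this step — the inequality $d_i \geq \tau_\alpha > r_j$ comes for free from the feasibility of the original schedule $S$, and the release-time constraint $r_i \leq r_j$ comes for free from the indexing convention. So the argument is essentially a one-slot exchange argument, and nothing more subtle is required.
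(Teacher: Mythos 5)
Your proposal is correct and is essentially the same one-slot exchange argument as the paper's proof: relocate the entire batch $J_\alpha$ from $\tau_\alpha$ to $r_j$, note feasibility because $r_j$ is the latest release time in $J_\alpha$ and moving jobs earlier cannot violate deadlines, and conclude the flow time drops by $|J_\alpha|(\tau_\alpha - r_j)$, a contradiction. Your extra remarks (that $\tau_\alpha \geq r_j$ because job $j$ must sit in the last slot, and that the relocated slot does not collide with $\tau_{\alpha-1}$) are points the paper leaves implicit but are consistent with its argument.
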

\begin{proof}
Assume for the sake of contradiction that some schedule $S \in \mathcal{S}(\alpha,j)$ satisfies all conditions of the claim but has $\tau_\alpha > r_j$. Let $J_{\alpha} \subseteq [j]$ denote the set of jobs assigned to $\tau_\alpha$ in $S$. 
Consider the schedule $S'$ that schedules all jobs as in $S$, except jobs in $J_\alpha$ are scheduled at $r_j$ instead of $\tau_\alpha$. $S'$ is clearly feasible, as $r_j$ is the latest release time of any job in $J_{\alpha}$ and $|J_{\alpha}| \leq B$, since $S$ was a valid schedule. The flow time of $S'$ is exactly $|J_\alpha| \cdot (\tau_\alpha - r_j)$ lower than that of $S$, contradicting the fact that $S \in \mathcal{S}(\alpha,j)$.

\end{proof}

\begin{claim}\label{less-slots-opt}
Fix $\alpha > 1$.
For a feasible, non-extraneous instance $(\alpha,j)$ of unit length jobs, if $\alpha' < \alpha$ then
$\textsf{OPT}(\alpha,j) < \textsf{OPT}(\alpha',j)$.
\end{claim}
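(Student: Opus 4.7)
The plan is to take any schedule $S' \in \mathcal{S}(\alpha',j)$ and produce a schedule $S$ using at most $\alpha$ active slots whose flow time is strictly smaller, which suffices since $\textsf{OPT}(\alpha,j) \leq \textsf{cost}(S) < \textsf{cost}(S') = \textsf{OPT}(\alpha',j)$. Before doing this, I would dispose of the degenerate case: if $(\alpha',j)$ is infeasible then $\textsf{OPT}(\alpha',j) = \infty$ is strictly larger than the finite $\textsf{OPT}(\alpha,j)$, so the claim holds trivially, and I can assume $(\alpha',j)$ is feasible so that $S'$ exists.

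The key observation is that non-extraneity of $(\alpha,j)$ gives us more distinct release times than $S'$ has active slots. Specifically, the number of distinct release times among $[j]$ exceeds $\alpha$, hence exceeds $\alpha'$, so by pigeonhole there is some release time $r^*$ of some job $j^* \in [j]$ that is \emph{not} one of the $\alpha'$ active slots used by $S'$. Because jobs cannot be scheduled before release, the slot $t_{j^*}$ where $S'$ schedules $j^*$ must satisfy $t_{j^*} \geq r_{j^*} = r^*$; combined with $t_{j^*} \neq r^*$ this gives $t_{j^*} \geq r^*+1$, so the contribution of $j^*$ to the flow time of $S'$ is $t_{j^*}+1-r^* \geq 2$.

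From here I would build $S$ by copying $S'$, opening a new active slot at time $r^*$, and reassigning $j^*$ into that slot. Feasibility of this slot is immediate: it holds only one job so it respects capacity $B$, and $r^* + 1 \leq d_{j^*}$ since $S'$ was feasible for $j^*$. The new schedule uses at most $\alpha' + 1 \leq \alpha$ active slots, $j^*$'s contribution to flow time drops from at least $2$ down to $1$, and every other job is scheduled exactly as in $S'$; hence $\textsf{cost}(S) \leq \textsf{cost}(S') - 1$, which closes the argument. I do not anticipate a real obstacle: the only subtlety is confirming that the moved job $j^*$ strictly decreases the total flow time and does not disturb any other job, both of which follow directly from the construction.
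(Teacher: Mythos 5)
Your proposal is correct and follows essentially the same argument as the paper: use non-extraneity to find a release time not among the $\alpha' < \alpha$ active slots of an optimal $\alpha'$-slot schedule, open a new active slot there, and move work into it to strictly decrease flow time. The only cosmetic differences are that the paper moves all (up to $B$) jobs with that release time rather than a single job, and your explicit handling of the infeasible $(\alpha',j)$ case is a harmless addition.
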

\begin{proof}
Consider a schedule $S' \in \mathcal{S}(\alpha',j)$. For $i \in [j]$, pick any release time $r_i$ that is not currently in the set of active time slots. We know there is at least one such $r_i$, as $(\alpha, j)$ is non-extraneous. Add an active time slot $\tau$ at $r_i$. Create the schedule $S$ by scheduling all jobs with release time $r_i$ at $\tau$ (or only $B$ of them if more than $B$ jobs have release time $r_i$) and scheduling all other jobs as they were in $S'$. Any job scheduled at $\tau$ in $S$ has less cost than in $S'$ as it was moved strictly earlier, and any job scheduled not at $\tau$ has the same cost in $S'$ and $S$. Therefore the cost of $S$ is less than that of $S'$, which implies $\textsf{OPT}(\alpha,j) < \textsf{OPT}(\alpha',j)$.
\end{proof}

Equipped with the previous claims, we are ready to prove Lemma~\ref{j-release-time}.

\begin{lemma}\label{j-release-time}
For a feasible, $B$-capacity compatible, non-extraneous instance $(\alpha,j)$ of unit length jobs with agreeable deadlines, every ordered $S \in \mathcal{S}(\alpha,j)$
has its last active time slot $\tau_\alpha$ at $r_j$.
\end{lemma}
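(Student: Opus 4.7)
The plan is to argue by contradiction. Since $S$ is ordered, job $j$ sits in the last batch $J_\alpha$, so $\tau_\alpha \geq r_j$; I will assume toward a contradiction that $\tau_\alpha > r_j$. By the contrapositive of Claim~\ref{last-time-slot}, this assumption immediately forces $\tau_{\alpha-1} \geq r_j$. From here I split on whether $r_j$ is itself an active slot of $S$.

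In the easy case, $r_j \notin \{\tau_1,\ldots,\tau_{\alpha-1}\}$. Then I would relocate the entire batch $J_\alpha$ from $\tau_\alpha$ to $r_j$: every job in $J_\alpha$ has release time $\leq r_j$ and, since it was scheduled at $\tau_\alpha$, deadline $\geq \tau_\alpha > r_j$, so the move is feasible, respects batch capacity, and strictly lowers total flow time by $|J_\alpha|(\tau_\alpha - r_j) > 0$, contradicting $S \in \mathcal{S}(\alpha,j)$.

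The harder case is when $r_j = \tau_\beta$ for some $\beta \leq \alpha - 1$. I would first argue that $|J_\gamma| = B$ for every $\gamma \in [\beta,\alpha-1]$: otherwise I can slide a job out of $J_\alpha$ into a slack tail batch $J_\gamma$ (the release is $\leq r_j \leq \tau_\gamma$ and the deadline is $\geq \tau_\alpha > \tau_\gamma$, so feasibility holds), strictly reducing cost; and if doing so empties $J_\alpha$, the resulting $(\alpha-1)$-slot schedule has cost below $\textsf{OPT}(\alpha,j)$, contradicting Claim~\ref{less-slots-opt}. Once all of $J_\beta,\ldots,J_{\alpha-1}$ are saturated, I would invoke non-extraneity: more than $\alpha$ distinct release times but only $\alpha$ active slots force the existence of some release time $r^* < r_j$ that is not an active slot of $S$. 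The plan is to build a modified $\alpha$-slot schedule that swaps $\tau_\alpha$ for a new active slot at $r^*$, migrates jobs with release time $r^*$ (which currently sit in the saturated tail and contribute flow at least $r_j - r^* > 0$ each) to this new slot, and uses the capacity they vacate to absorb $J_\alpha$; both relocations strictly decrease cost, yielding the final contradiction.

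The main obstacle I expect is the last sub-case: verifying that the slack freed by pulling release-$r^*$ jobs out of the saturated tail really is enough to absorb all of $J_\alpha$, while respecting release times, deadlines, and the requirement that active slots be distinct. Handling it cleanly will require combining $B$-capacity compatibility (which caps how many jobs cluster at each late release time, including $r_j$ itself) with a pigeonhole count of release times lying in the interval $(\tau_{\beta-1}, r_j)$ that non-extraneity forces to be non-slot, and with the fact that moving a job earlier is always deadline-feasible under agreeable deadlines.
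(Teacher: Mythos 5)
Your Case 1 (the slot $r_j$ is not already active) is fine, and is essentially Claim~\ref{last-time-slot} with the hypothesis $\tau_{\alpha-1}<r_j$ relaxed. The genuine gap is in Case 2. After establishing that the tail batches $J_\beta,\dots,J_{\alpha-1}$ are saturated, your construction presupposes that the jobs whose release time is the non-active time $r^*<r_j$ sit in that saturated tail, and that there are at least $|J_\alpha|$ of them. Neither is guaranteed. Non-extraneity only gives you \emph{some} release time $r^*<r_j$ that is not an active slot; the jobs released at $r^*$ may all be scheduled in the early batches $J_1,\dots,J_{\beta-1}$, at times strictly before $r_j$. Moving them to $r^*$ then frees capacity only at times $<r_j$, which cannot absorb $J_\alpha$, since jobs of $J_\alpha$ may be released exactly at $r_j$. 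Concretely, take $B=2$, $\alpha=3$, six jobs with release times $0,1,3,3,5,5$, and the ordered schedule with active slots $3,5,6$ and batches $\{1,2\},\{3,4\},\{5,6\}$: here $\tau_\beta=5=r_6$, the tail batch at $5$ is saturated, the only non-active release times are $0$ and $1$, and no job in the tail has release time $\le 1$, so your exchange produces nothing. This schedule is of course not optimal, but your argument invokes optimality only to derive tail saturation; after that the construction must succeed on \emph{every} schedule with that structure, and it does not. So the difficulty is not merely the counting issue you flag at the end (slack possibly smaller than $|J_\alpha|$) -- the freed slack can also live at the wrong times.

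Repairing this requires knowing where the \emph{earlier} active slots sit, which your purely local exchange never establishes. That is exactly what the paper's proof supplies: it inducts on $\alpha$, so the inductive hypothesis pins the last slot of the prefix sub-schedule on $[b]$ at $r_b$, and it then rules out the choice $r_b=r_j$ by a global cost comparison -- the decomposition putting $[b^*+1,j]$ at $r_j$ (with $b^*$ the last index of release time $<r_j$) beats any decomposition putting $[b''+1,j]$ at $\tau_\alpha>r_j$, via Claim~\ref{less-jobs-opt} combined with the observation that the former schedules its last batch entirely at release times. Your sketch contains neither the induction nor this comparison, so as written the plan does not close.
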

\begin{proof}
Note that since the instance is non-extraneous, a feasible schedule must have $\tau_\alpha \geq r_j$. The proof follows by induction on $\alpha$. \\
\\
\noindent
\textbf{Base Case:} Let $\alpha = 1$. Since the instance is feasible and $B$-capacity compatible, a single active slot at the only release time schedules the jobs optimally. \\ 
\\
\noindent 
\textbf{Inductive Step:} 
Fix an ordered $S \in \mathcal{S}(\alpha,j)$, which exists by Lemma \ref{staircase}. 
Observe that by the second statement in Lemma~\ref{staircase},
the set of jobs scheduled at $\tau_\alpha$ must be a consecutive set of the form $[b + 1, j]$ where  $b \in [j-B,j-1 ]$.
$S$ can be therefore represented by the union of two schedules $S_1, S_2$ such that $S_1$ schedules the first $b$ jobs in the first $\alpha - 1$ active time slots and $S_2$ schedules the remaining $j - b$ jobs in the last active time slot. 
If $S_1$ is extraneous, it must be that $r_b < r_j$ since $(\alpha,j)$ was non-extraneous. Here, every job in $S_1$ could be scheduled at their release times, which are before $r_j$. From Claim \ref{last-time-slot}, $\tau_\alpha=r_j$.
Therefore, the interesting case is when $S_1$ is also non-extraneous, which we will assume for the rest of the proof.
Note that $S_1$ indeed uses $\alpha-1$ active slots, as the instance is non-extraneous and so Claim \ref{less-slots-opt} applies.

Note that for any given $b$, the optimal cost of scheduling jobs $[b]$ in $\alpha - 1$ time slots is exactly $\textsf{OPT}(\alpha - 1, b)$, and the schedules that achieve this cost are in $\mathcal{S}(\alpha - 1, b)$. By the inductive hypothesis, we know that the last slot of $S_1$ is scheduled at $r_b$. Note that $r_b \leq r_j$ because the jobs are ordered by non-decreasing release time.

We now consider every possible $b \in [j-B,j-1 ]$ for a given $j$. We first consider $b'$ such that $r_{b'} < r_j$. Then by Claim \ref{last-time-slot}, we must have $\tau_\alpha = r_j$. The cost of schedule $S'$ based on choosing $b'$ is $\textsf{OPT}(\alpha - 1, b')+c_{b'}$ where $c_{b'}$ is the cost of scheduling jobs $[b' + 1, j]$ at $r_j$. 
There must be at least one $b'$ such that $(\alpha-1,b')$ is feasible and $r_{b'} < r_j$, namely $b' = j-B$. 
This is because the instance is $B$-capacity compatible with at least job $j$ having release time $r_j$, so there are at most $B - 1$ jobs directly before $j$ with release time $r_j$. Choose such a $b'$ with maximal index and let this be $b^*$, with corresponding schedule $S^*$ on $(\alpha, j)$. 

Now consider $b''$ such that $r_{b''} = r_j$. Then it must be the case that $\tau_\alpha > r_j$ because $\tau_{\alpha - 1}$ must be at $r_{b''}$ by the IH. The cost of schedule $S''$ based on choosing $b''$ is $\textsf{OPT}(\alpha - 1, b'') + c_{b''}$ where $c_{b''}$ is the cost of scheduling jobs $[b''+1, j]$ at $\tau_\alpha$.

For any such $b''$ and for our specially chosen $b^*$, $b^* < b''$, and so we can apply Claim~\ref{less-jobs-opt} to see
that $\textsf{OPT}(\alpha - 1, b^*) \leq \textsf{OPT}(\alpha - 1, b'') - (b''-b^*)$. We also know that all jobs scheduled at $\tau_\alpha$ in $S^*$ are scheduled at their release time, because all jobs in $[b^* + 1, j]$ have release time $r_j$. No jobs at $\tau_\alpha$ in $S''$ are scheduled at their release time, which implies $c_{b^*} < c_{b''} +(b''-b^*)$.
Adding these two inequalities, we see that
$$\textsf{OPT}(\alpha - 1, b^*) + c_{b^*} < \textsf{OPT}(\alpha - 1, b'') + c_{b''},$$
which shows that the optimal schedule chooses $b$ such that $r_b < r_j$. Thus by the inductive hypothesis, Claim~\ref{last-time-slot}, and the fact that $(\alpha,j)$ is non-extraneous, $S$ has its last active slot at $r_j$. \\
\noindent 
\end{proof}

\begin{figure}
    \centering
    \includegraphics[width = 8cm]{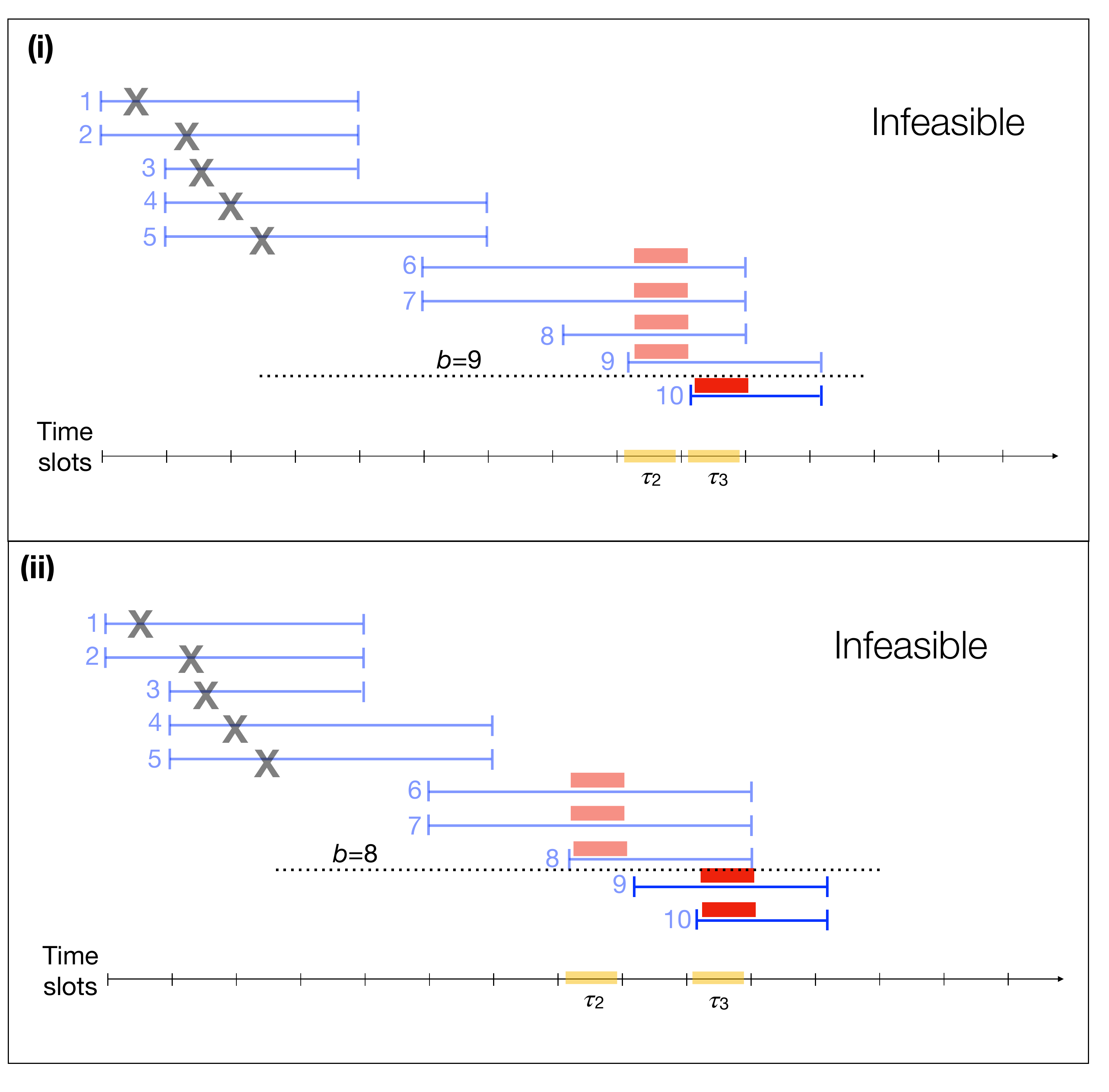}\\ 
        \caption{Blue intervals indicate release time and deadline intervals. Red rectangles indicate where a job is scheduled, and highlighted slots are active. Suppose $B=4$ and $\alpha = 3$.  The DP loops over choices of $b$.  Jobs $[b+1,j]$ are scheduled in $\tau_3$, then the DP recurses over the remaining subproblem.  
        Frame (i) scheduled job 10 in $\tau_3$, but it is impossible to schedule the 9 remaining jobs in 2 time slots. Frame (ii) tried to schedule 9,10 in $\tau_3$, but again the remaining subproblem is infeasible. Frame (iii) and frame (iv) provide feasible solutions.}
     \includegraphics[width = 8cm]{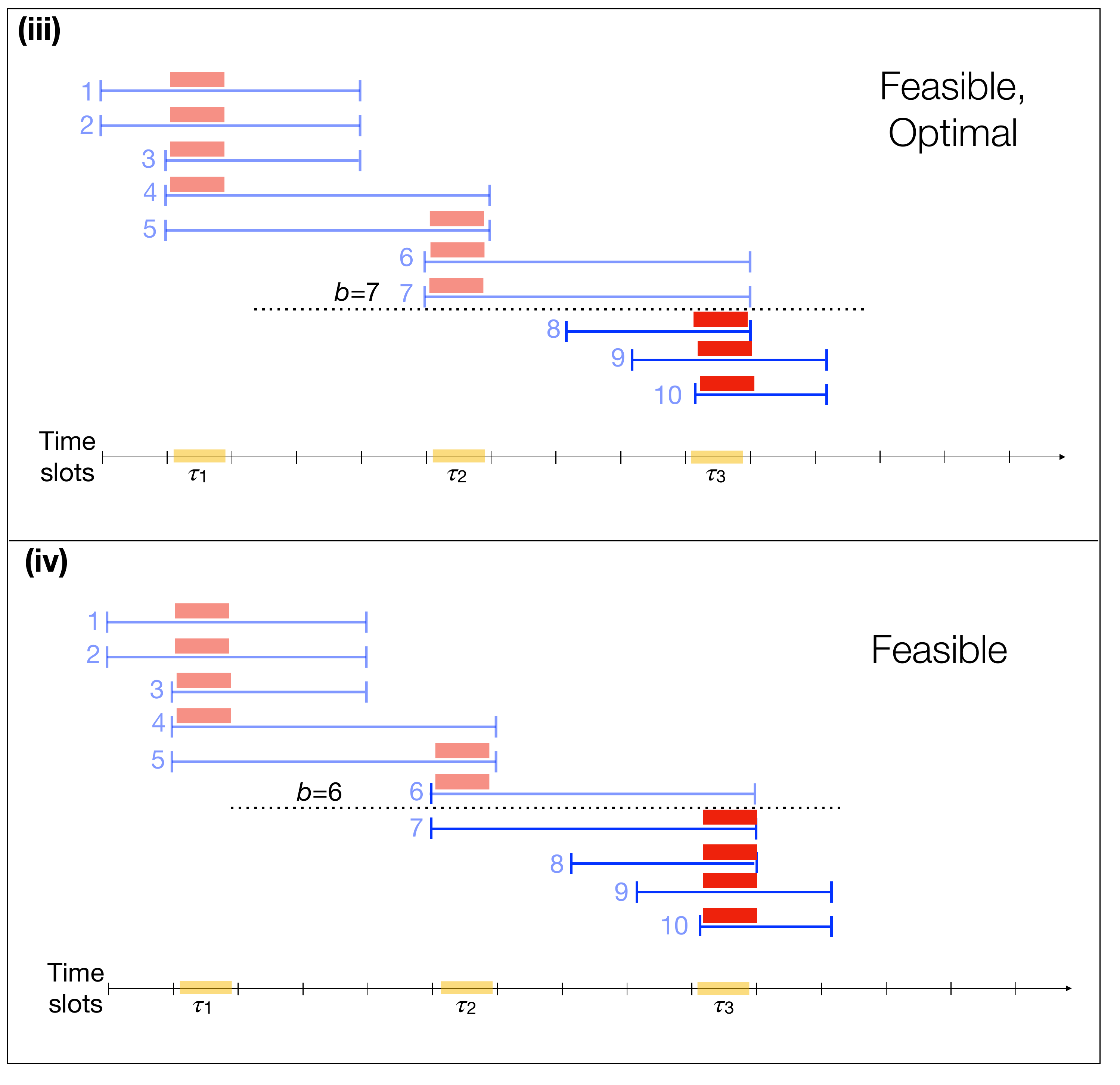}
    \label{fig: DP-ex}
\end{figure}

We are now ready to prove Theorem \ref{main}.

\begin{proof}[Proof of Theorem \ref{main}]
Let $(k,n)$ be an instance of unit length jobs with agreeable deadlines, where jobs are ordered by release times.
By Lemma \ref{preprocessing}, it is without loss of generality to assume that $(k,n)$ is $B$-capacity compatible.
For a fixed job $\ell$, let $i_\ell$ be the largest job with $r_{i_\ell} < r_\ell$.  
Fix any $0 \leq \alpha \leq k$ and $0 \leq j \leq n$.

Our dynamic program for this setting is the following:\\
   If $(\alpha,j)$ has $B \cdot \alpha < j$,  then  $\textsf{OPT}(\alpha,j) = \infty$. \\
    If  $(\alpha,j)$ has $B \cdot \alpha \geq j$ and is extraneous, then $\textsf{OPT}(\alpha,j) =j.$\\
    If $(\alpha,j)$ has $B\cdot \alpha\geq j$ and is non-extraneous, then:
    $$\textsf{OPT}(\alpha,j) =  \min_{\underset{ d_{b+1} > r_j}{b \in [j-B,i_j]:}} \left ( \textsf{OPT}(\alpha-1,b) +  \sum_{u=b+1}^j \left (r_j-r_{u} +1 \right) \right ).$$

We begin with the base cases of our dynamic program. If $(\alpha,j)$ has $B \cdot \alpha < j$,  then jobs $[j]$ cannot feasibly be scheduled, so  $\textsf{OPT}(\alpha,j) = \infty$.
If  $(\alpha,j)$ has $B \cdot \alpha \geq j$ and is extraneous, then every job can be scheduled at its release time, and so $\textsf{OPT}(\alpha,j) =j.$

The remaining setting outside of the base cases is when $B\cdot \alpha\geq j$ and is non-extraneous. We will prove the recurrence of our DP.

Suppose first that $(\alpha,j)$ is feasible.
If a base case does not apply, it must be that $(\alpha,j)$ is non-extraneous.
From Lemma~\ref{staircase}, we know that there is an optimal schedule that is ordered, which implies the last active slot consists of jobs $[b + 1, j]$ for some $b \in [j - B, i_j]$ as shown in Lemma \ref{j-release-time}.
The rest of the jobs use the remaining $\alpha-1$ active slots. In particular, no less than $\alpha-1$ slots are used as shown by Claim \ref{less-slots-opt}.
Additionally, Lemma \ref{j-release-time} shows that jobs in $[b+1,j]$ are scheduled at time slot $r_j$.  Since $b$ has $r_b < r_j$, jobs in $[b]$ will use no active time slot later than $r_b$, by Lemma \ref{j-release-time} and Lemma \ref{staircase}. 
It follows that an optimal schedule has flow time 
 $\textsf{OPT}(\alpha-1,b)$ contributed by the first $b$ jobs and flow time $ \sum_{u=b+1}^j \left (r_j-r_{u} +1 \right)$
 contributed by the rest of the jobs.
Taking the minimum over all possible $b$ gives the recurrence; see Figure \ref{fig: DP-ex}.
The proof of Lemma~\ref{j-release-time} also shows us that for feasible $(\alpha,j)$, there is always a choice for $b$ such that $(\alpha-1,b)$ is feasible. 
Finally, the number of active slots decreases by 1 each time, so the subroutines all terminate with some subset of the base cases.

Now suppose that $(\alpha,j)$ is infeasible. 
If $\alpha \cdot B < j$, then  $\textsf{OPT}(\alpha,j) = \infty$. Otherwise, we look at the recurrence.
Assume for sake of deriving a contradiction that there was some choice of $b$s, $\{b_1,\ldots,b_v\}$ in the recurrence that led to no subroutine  $(\alpha,j)$ with $B \cdot \alpha < j$. Jobs $[b_1+1,\ldots,j]$ are grouped into the $\alpha$ time slot, jobs $[b_2+1, b_1]$ are grouped in the $\alpha-1$st time slot, and so on until the base cases are hit, which occurs since $\alpha$ decreases monotonically. If only the base cases with finite cost are subschedules, then the recurrence has given rise to a feasible schedule for jobs $[j]$ that uses only $\alpha$ time slots. This contradicts the fact that no such schedule can exist for $(\alpha,j)$. 
Therefore, we can still identify infeasible instances by running the dynamic program which will output $\infty$ for $\textsf{OPT}(k, n)$ when the instance is infeasible.

The runtime of the dynamic program is $O(B \cdot k \cdot n)$, and the space complexity of the DP is $O(k \cdot n).$ The runtime and space complexity includes iterating over all values of $\alpha$ from $1$ to $k$ and jobs from $1$ to $n$ in the parameters of \textsf{OPT}. In addition, the inner minimization of $b$ contributes an extra factor of $B$ to the runtime. Note that the space complexity can be improved to $O(n)$ if we are only interested in the cost of the optimal schedule (and not the schedule itself), because in order to calculate $OPT(\alpha, j)$ for all $j < n$ we require only the values of $OPT(\alpha - 1, b)$ for all $b < n$.
Observe that the DP constructs a schedule obtaining the minimum flow time if $(k,n)$ is feasible. Otherwise, it gives a certificate that there is no feasible schedule.
\end{proof}

\subsection*{Completing a subset}\label{sec: thm2}

Here, we show that the dynamic program in Theorem  \ref{main} can be extended to the case when instead of completing all $n$ jobs, we only want to complete $m < n$ jobs.
We consider $m$ as part of the input, and we wish to choose the $m$ jobs that will be completed such that flow time is minimized out of all $\binom{n}{m}$ possible choices. We write such an instance as $(k,n)_m$,
 and denote subproblems as $(\alpha,j)_q$, for $0 \leq \alpha \leq k$, $0 \leq j \leq n$, and $ 0 \leq q \leq m,j$.

The instance $(\alpha,j)_q$ is feasible exactly when it is possible to schedule $q$ jobs of $[j]$ in $\alpha$ active time slots. $\textsf{OPT}(\alpha,j)_q$ is the minimum flow time achievable for $(\alpha,j)_q$. When $(\alpha,j)_q$ is infeasible, $\textsf{OPT}(\alpha,j)_q = \infty$.
For feasible $(\alpha,j)_q$, $\mathcal{S}(\alpha,j)_q$ denotes the set of schedules with flow time $\textsf{OPT}(\alpha,j)_q$ that use $\alpha$ active slots and schedule $q$ of the jobs in $[j]$.
We also call an instance $(\alpha,j)_q$ extraneous if there exists a choice of $q$ jobs from $[j]$ such that the number of distinct release times of those $q$ jobs is at most $\alpha$, otherwise the instance is called non-extraneous. 

Overall, we show the following extension.

\begin{thm}\label{subset}
Let $(k,n)$ be an instance of unit length jobs with agreeable deadlines. Fix $0 \leq m \leq n$. Then one can either certify $(k,n)_m$ is infeasible or find a schedule for $(k,n)_m$ that minimizes the flow time with a dynamic program in time $O(B \cdot k \cdot n \cdot (n-m))$ and space $O(k \cdot n \cdot (n-m))$.
\end{thm}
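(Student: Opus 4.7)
The plan is to augment the DP of Theorem~\ref{main} with a third parameter $r$ that counts the number of jobs in $[j]$ that are skipped (not scheduled). Let $\textsf{OPT}(\alpha, j, r)$ denote the minimum flow time of any schedule that uses at most $\alpha$ active slots and completes exactly $j - r$ jobs chosen from $[j]$; the target quantity is $\textsf{OPT}(k, n, n - m)$. Since $0 \leq r \leq n - m$ at every state of interest, there are $O(k \cdot n \cdot (n - m))$ DP states. As in Theorem~\ref{main}, I first invoke Lemma~\ref{preprocessing} so that the instance is $B$-capacity compatible.

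The first step is to generalize Lemmas~\ref{staircase} and~\ref{j-release-time} to the subset setting. Since agreeable deadlines are inherited by any subset of $[n]$, the swap argument of Lemma~\ref{staircase} still produces an optimal schedule that is ordered on the selected subset. Next, I claim that an optimal schedule can be chosen so that the last batch, if it contains job $j$, is a consecutive block $[b+1, j]$ (no skips within $[b+1, j]$). Suppose instead the last batch $T$ has $\min T = c$ and some skipped job $d$ with $c < d < j$. Swapping $c$ out of the last batch and $d$ into it (i) preserves feasibility, since $r_d \leq r_j$ and, by agreeable deadlines, $d_d \geq d_c \geq r_j$; (ii) does not increase flow time, because the batch cost changes by $r_c - r_d \leq 0$; and (iii) preserves the ordered property, since every selected index in $(c, j]$ was already in $T$ by the original ordered-on-subset property. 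Iterating these swaps makes the last batch consecutive, after which the argument of Lemma~\ref{j-release-time} applies essentially verbatim to place this batch at time $r_j$.

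With the structural lemma in hand, the recurrence at a non-base state is
\[
\textsf{OPT}(\alpha, j, r) = \min\Bigl(\textsf{OPT}(\alpha, j-1, r-1),\; \min_{b}\bigl(\textsf{OPT}(\alpha-1, b, r) + \sum_{u=b+1}^{j}(r_j - r_u + 1)\bigr)\Bigr),
\]
where the first branch skips job $j$ (requiring $r \geq 1$) and the second places $j$ in a consecutive last batch $[b+1, j]$ at time $r_j$. The admissible range for $b$ is $[\max(j-B,\, r),\, i_j]$ with $d_{b+1} > r_j$: the upper bound and the $d_{b+1} > r_j$ constraint mirror Theorem~\ref{main}, while the lower bound $b \geq r$ ensures that the subproblem $(\alpha-1, b, r)$ is well-defined (one cannot skip more than $b$ jobs out of $[b]$). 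The base cases $\textsf{OPT}(\alpha, j, j) = 0$ and $\textsf{OPT}(\alpha, j, r) = \infty$ when $B \cdot \alpha < j - r$ are direct analogs of Theorem~\ref{main}, and the extraneous case transfers similarly, with $\textsf{OPT}(\alpha, j, r) = j - r$ whenever a $(j-r)$-subset of $[j]$ can be placed at its release times. Each of the $O(k \cdot n \cdot (n - m))$ states is evaluated in $O(B)$ time due to the inner minimization over $b$, giving total runtime $O(B \cdot k \cdot n \cdot (n-m))$ and space $O(k \cdot n \cdot (n-m))$; $(k,n)_m$ is infeasible exactly when $\textsf{OPT}(k, n, n-m) = \infty$.

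The main obstacle is proving the structural claim that, in the subset setting, the last batch can still be taken as a consecutive block $[b+1, j]$. The swap argument must simultaneously preserve feasibility of the swapped-in job (which is where agreeable deadlines enter, through $d_d \geq d_c \geq r_j$), the ordered property on the revised selected set, and the validity of the subproblem on the jobs scheduled in the first $\alpha - 1$ slots. A secondary subtlety is verifying that the exchange-style argument used in the proof of Lemma~\ref{j-release-time} to force $b \leq i_j$ still goes through in the presence of skips, or else is rendered harmless by the skip-$j$ branch of the recurrence. Once these structural properties are established, the rest of the argument mirrors Theorem~\ref{main} closely, with the extra parameter $r$ threaded through each induction.
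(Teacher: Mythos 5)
Your proposal is correct and follows essentially the same route as the paper: both generalize the ordered-schedule lemma so that the last batch is a contiguous index block placed at the release time of its largest scheduled job, and both thread a subset-size parameter through the DP of Theorem~\ref{main}, with your explicit ``skip job $j$'' transition being just an incremental reformulation of the paper's minimization over the largest scheduled index $\ell \in [q,j]$ (under the change of variables $r = j - q$). The one step you defer---that the exchange argument forcing $b \le i_j$ survives the presence of skips---is carried out in the paper via subscripted analogues of Claims~\ref{less-jobs-opt}--\ref{less-slots-opt} exactly as you anticipate, so there is no substantive gap.
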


Again, we can still pre-process the jobs so that no more than $B$ have the same release time, as in Lemma~\ref{preprocessing}; the same proof holds.
There still exists an ordered optimal schedule---i.e. a schedule $S \in \mathcal{S}(\alpha,j)_q$ such that if $i < \ell$ and $i$ and $\ell$ are scheduled, then $t_i \leq t_\ell$, as in Lemma~\ref{staircase}. The proof that an ordered optimal solution exists still holds, but our definition of ordered is not quite strong enough here. Additionally, there exists an ordered $S \in\mathcal{S}(\alpha,j)_q$ such that if $j$ is scheduled in $S$, the set of jobs scheduled at the same time slot as $j$ in $S$ is of the form $[b+1,j]$ for $b \in [j-B,j-1].$ To see this, consider any ordered $S'  \in\mathcal{S}(\alpha,j)_q$ and suppose that $j$ was scheduled with a non-contiguous indexed set of jobs in $S'$ of size $\ell \leq B$. Then let $S$ be a schedule that assigns all jobs as in $S'$ except potentially the $\ell$ jobs scheduled with $j$ in $S'$; schedule jobs $[j-\ell,j]$ with $j$ instead. $S$ and $S'$ schedule $q$ jobs, they are both ordered, and $S$ does not have greater flow time than $S'$ since jobs have agreeable deadlines. It follows that $S \in \mathcal{S}(\alpha,j)_q$ has the desired property. However, it is not hard to see that one can do this for every active time slot, not just the one that $j$ is assigned to. So there exists a schedule in $ \mathcal{S}(\alpha,j)_q$ such that for every active time slot $\tau_\beta$ for $1 \leq \beta \leq \alpha$, the set of jobs scheduled at $\tau_\beta$ is a contiguous set of indexed jobs $J_\beta = [l_\beta, u_\beta]$, where $u_\beta - 
l_\beta \leq B$.
We refer to such a schedule as being ordered and \emph{contiguous}, as every active time slot schedules a contiguous interval of jobs. 
Note that when we were forced to complete all jobs, ordered schedules were also contiguous. We state the analogous version of Lemma \ref{staircase} for the subset setting, whose proof follows from the discussion above.

\begin{lemma}\label{staircase-subset}
Let $(\alpha,j)_q$ be a feasible instance of  uniform jobs with agreeable deadlines. There exists an
 ordered, contiguous schedule $S \in \mathcal{S}(\alpha,j)_q$. In other words,
for all $i,\ell \in [j]$ with $i < \ell$, if job $i$ is scheduled at time $t_i$ and job $\ell$ is scheduled at time $t_{\ell}$ in $S$, then $t_i \leq t_{\ell}$, and the set of jobs $S$ assigns to any time slot $\tau_\beta$ is a contiguous set of indices $[l_\beta, u_\beta]$, where $u_\beta - 
l_\beta \leq B$.
\end{lemma}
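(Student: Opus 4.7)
The plan is to separate the two properties---orderedness and contiguity---and prove each by an exchange argument.

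First I will reuse the pairwise swap from the proof of Lemma \ref{staircase} essentially verbatim. Starting from any $S^* \in \mathcal{S}(\alpha, j)_q$, whenever two scheduled jobs $i < \ell$ have $t_i > t_\ell$, I swap their start times. The sorting convention gives $r_i \le r_\ell$ and agreeable deadlines give $d_i \le d_\ell$, so $i$ remains feasible when moved earlier and $\ell$ remains feasible when moved later; the scheduled set is unchanged and flow time does not increase. After finitely many such swaps I obtain an ordered $S' \in \mathcal{S}(\alpha, j)_q$.

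Next I enforce contiguity while preserving orderedness. Suppose some slot $\tau_\beta$ of $S'$ holds a set $J_\beta = \{a_1 < \cdots < a_k\}$ that is not contiguous; a counting check then forces $a_1 < a_k - k + 1$, so $[a_k - k + 1, a_k] \setminus J_\beta$ is nonempty. The key consequence of orderedness is that any index $m \in (a_1, a_k) \setminus J_\beta$ must be unscheduled in $S'$: writing $a_s < m < a_{s+1}$ with $a_s, a_{s+1}$ both at $\tau_\beta$, the ordered property pins any scheduled copy of $m$ to $\tau_\beta$ itself, contradicting $m \notin J_\beta$. I then drop $a_1$ from the schedule and add some such $m \in [a_k - k + 1, a_k] \setminus J_\beta$ to $\tau_\beta$. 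Agreeable deadlines yield $r_m \le r_{a_{s+1}} \le \tau_\beta$ and $\tau_\beta + p \le d_{a_s} \le d_m$, so $m$ is feasible at $\tau_\beta$; the scheduled set retains size $q$; the flow time does not increase because $r_{a_1} \le r_m$; and orderedness persists since only $\tau_\beta$'s contents change and its new minimum index strictly exceeds $a_1$. Each such operation strictly decreases $\sum_\beta (u_\beta - l_\beta)$, so the process terminates at an ordered schedule in which every slot is contiguous, and the batch-capacity constraint $|J_\beta|\le B$ gives the width bound $u_\beta - l_\beta \le B$.

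The main subtlety I expect is verifying that a local swap at $\tau_\beta$ cannot break orderedness or undo contiguity at a different slot. This is immediate from the construction: each operation only moves a single job between $\tau_\beta$ and the unscheduled pool, so the assignments at all other active slots are left untouched, and the invariants built up for earlier slots are preserved.
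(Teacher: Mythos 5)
Your proof is correct and follows essentially the same route as the paper: orderedness via the pairwise swap of Lemma \ref{staircase}, then an exchange argument that repairs each slot's contiguity by replacing its lowest-indexed job with an unscheduled higher-indexed one, using the sorted release times for the flow-time bound and agreeable deadlines for feasibility. If anything, your version is more careful than the paper's (which simply replaces the whole batch with the top contiguous block and asserts the rest), since you explicitly justify that the intermediate indices are unscheduled and supply a potential function for termination.
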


Overall, we can show a structural lemma very similar to that of Lemma \ref{j-release-time}. However, we will need more general versions of Claims \ref{less-jobs-opt}, \ref{last-time-slot}, and \ref{less-slots-opt} in order to prove that structural lemma. The proofs of these claims are omitted, as they are only slight, simple variations of the original claims.

\begin{claim}\label{less-jobs-opt-subset}
Fix $\alpha \geq 1$.
For a feasible instance $(\alpha,j)_{j-s}$ of unit length jobs,
$\textsf{OPT}(\alpha,i)_{i-s} \leq \textsf{OPT}(\alpha,j)_{j-s}- (j-i)$ for all $i \leq j$ and $s \geq 0$.
\end{claim}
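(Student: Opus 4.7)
The plan is to adapt the proof of Claim~\ref{less-jobs-opt} to the subset setting. The high-level strategy remains the same: start from an optimal schedule $S \in \mathcal{S}(\alpha,j)_{j-s}$, restrict attention to the jobs in $[i]$, and exploit the fact that each unit-length job contributes at least $1$ to flow time. The only new bookkeeping issue is that, after restriction, the schedule may complete more than the $i-s$ jobs required by the subproblem, so some completed jobs will need to be removed.

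Concretely, I would let $t$ denote the number of jobs in $[i+1, j]$ that $S$ schedules, so $0 \le t \le j-i$. Restricting $S$ to the jobs of $[i]$ yields a schedule that completes $(j-s)-t$ jobs using at most $\alpha$ active slots. Since $t \le j-i$, this count is at least $(j-s)-(j-i) = i-s$, which is exactly the lower bound we need. If the restricted schedule completes strictly more than $i-s$ jobs from $[i]$, I would arbitrarily unschedule the surplus $(j-s-t) - (i-s) = (j-i)-t$ jobs, producing a schedule $S'$ that completes exactly $i-s$ jobs from $[i]$ and still uses at most $\alpha$ active slots. Thus $S'$ is feasible for $(\alpha,i)_{i-s}$.

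For the cost accounting, the total number of jobs dropped from $S$ in passing to $S'$ is $t + ((j-i)-t) = j-i$. Because each scheduled unit job has completion time at least one unit past its release time, each dropped job contributed at least $1$ to the flow time of $S$. Hence $\textsf{cost}(S') \le \textsf{OPT}(\alpha,j)_{j-s} - (j-i)$, and since $S'$ is feasible for $(\alpha,i)_{i-s}$ we obtain $\textsf{OPT}(\alpha,i)_{i-s} \le \textsf{OPT}(\alpha,j)_{j-s} - (j-i)$, as required.

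There is no substantial obstacle here—the proof is essentially a direct lift of Claim~\ref{less-jobs-opt}. The only mild subtlety, relative to the original, is verifying that the restriction leaves at least $i-s$ completed jobs so that $S'$ is feasible; this is immediate from the inequality $t \le j-i$.
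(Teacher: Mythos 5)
Your proof is correct and follows essentially the same route as the paper, which omits the argument on the grounds that it is a slight variation of Claim~\ref{less-jobs-opt}: restrict an optimal schedule to $[i]$, trim to exactly $i-s$ completed jobs, and charge each of the $j-i$ dropped unit jobs at least $1$ of flow time. Your explicit check that the restriction retains at least $i-s$ scheduled jobs (via $t \leq j-i$) is precisely the bookkeeping the subset setting requires.
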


\begin{claim}\label{last-time-slot-subset}
Fix $\alpha >1.$
For a feasible, non-extraneous instance $(\alpha,j)_q$  of unit length jobs, let $S \in \mathcal{S}(\alpha,j)_q$ be an ordered schedule with $\tau_{\alpha-1} < r_{j'}$, for $j'\leq j$ the largest indexed job scheduled by $S$. Then $S$ has its last active time slot $\tau_\alpha$ at $ r_{j'}$. 
\end{claim}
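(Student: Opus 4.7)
The plan is to mimic the original proof of Claim~\ref{last-time-slot}, with $j$ replaced by $j'$, and use the ordering of $S$ to locate $j'$ in the last active slot.

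First, I would identify the jobs in the final active slot. Since $S$ is ordered by Lemma~\ref{staircase-subset}, the scheduled job with the largest index is assigned to the latest active time slot; that scheduled job is $j'$ by hypothesis, so $j' \in J_\alpha$, where $J_\alpha \subseteq [j]$ denotes the jobs assigned to $\tau_\alpha$ in $S$. Every other job in $J_\alpha$ has index at most $j'$ and hence, by the sorting of jobs by non-decreasing release time, release time at most $r_{j'}$. In particular, $r_{j'}$ is the largest release time among the jobs in $J_\alpha$.

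Next, I would do the standard exchange argument. Suppose for contradiction that $\tau_\alpha > r_{j'}$. Define $S'$ by taking $S$ and moving every job in $J_\alpha$ to time $r_{j'}$, leaving all other assignments unchanged. Feasibility holds because: (i)~$|J_\alpha|\le B$ since $S$ was valid; (ii)~every job in $J_\alpha$ has release time at most $r_{j'}$ by the above, and deadline at least $\tau_\alpha > r_{j'}$ since it was feasible at $\tau_\alpha$ in $S$; and (iii)~the hypothesis $\tau_{\alpha-1} < r_{j'}$ together with $\tau_\alpha > r_{j'}$ ensures that the slot at $r_{j'}$ does not collide with any existing active slot of $S$, so $S'$ still uses at most $\alpha$ active slots and schedules exactly the same set of $q$ jobs that $S$ does. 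The flow time of $S'$ is exactly $|J_\alpha|\cdot(\tau_\alpha - r_{j'}) > 0$ smaller than that of $S$, contradicting $S \in \mathcal{S}(\alpha,j)_q$. This forces $\tau_\alpha \le r_{j'}$, and since $j'$ is scheduled at $\tau_\alpha$ we must have $\tau_\alpha \ge r_{j'}$, yielding $\tau_\alpha = r_{j'}$.

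The only step that is genuinely different from the non-subset proof is justifying that $j' \in J_\alpha$ (since in the original claim $j$ was required to be scheduled, whereas here one must pivot the argument onto $j'$ rather than $j$). Once that is established, the exchange argument is essentially the original one, which is presumably why the authors omitted the proof. No subtlety arises from $q < j$, because the non-scheduled jobs in $[j] \setminus \{\text{scheduled}\}$ are simply ignored by $S'$ as they were by $S$.
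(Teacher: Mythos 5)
Your proof is correct and is exactly the ``slight, simple variation'' of the proof of Claim~\ref{last-time-slot} that the paper invokes when it omits this argument: the same exchange moving $J_\alpha$ back to the latest release time in that batch, with the pivot from $j$ to $j'$ justified by orderedness. No discrepancy with the paper's (implicit) proof.
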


\begin{claim}\label{less-slots-opt-subset}
Fix $\alpha >1.$
For a feasible, non-extraneous instance $(\alpha,j)_q$ of unit length jobs, if $\alpha' < \alpha$ then
$\textsf{OPT}(\alpha,j)_q < \textsf{OPT}(\alpha',j)_q$.
\end{claim}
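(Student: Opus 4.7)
The plan is to adapt the proof of Claim \ref{less-slots-opt} essentially verbatim to the subset setting. The key difference is that we must take care to move jobs that are actually scheduled by the comparison schedule (rather than arbitrary jobs from $[j]$), but this is the only subtlety.

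First, I would take any $S' \in \mathcal{S}(\alpha',j)_q$, and let $Q' \subseteq [j]$ denote the set of $q$ jobs that $S'$ schedules. Next, I would invoke the non-extraneous hypothesis on $(\alpha,j)_q$: by definition, \emph{every} choice of $q$ jobs from $[j]$ exhibits strictly more than $\alpha$ distinct release times. Applying this to $Q'$ in particular, the jobs in $Q'$ have strictly more than $\alpha > \alpha'$ distinct release times. Since $S'$ uses only $\alpha'$ active slots, there must exist some job $i \in Q'$ whose release time $r_i$ is not one of those $\alpha'$ active slots.

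Next, I would construct a new schedule $S$ by adding a fresh active slot $\tau$ at time $r_i$ and rescheduling the jobs in $Q'$ with release time $r_i$ at $\tau$ (or an arbitrary $B$ of them, if there are more than $B$ such jobs in $Q'$); the job $i$ itself is one such job, so at least one job is moved. All other jobs of $Q'$ remain at their original times. The schedule $S$ is feasible: it still schedules exactly $q$ jobs from $[j]$, each job moved to $\tau$ has release time $r_i$ so $\tau$ lies in its feasibility window, and the batch capacity is not violated since at most $B$ jobs are assigned to $\tau$. Moreover, $S$ uses $\alpha' + 1 \leq \alpha$ active slots.

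Finally, I would compare costs. Every job moved to $\tau$ is now completed at its own release time and thus contributes flow time equal to the job length; this is strictly less than its contribution under $S'$, since $r_i$ was not an active slot in $S'$ so these jobs were previously scheduled strictly after $r_i$. Every other job contributes the same flow time in $S$ as in $S'$. Hence the total flow time of $S$ is strictly less than $\textsf{OPT}(\alpha',j)_q$, which yields $\textsf{OPT}(\alpha,j)_q \leq \text{cost}(S) < \textsf{OPT}(\alpha',j)_q$, as desired.

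The main (mild) obstacle is exactly the one already flagged: confirming that the non-extraneous hypothesis of $(\alpha,j)_q$ applies to the particular subset $Q'$ chosen by $S'$. Since non-extraneous is a universal statement over all size-$q$ subsets, this is immediate, and no further work is required.
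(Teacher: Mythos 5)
Your proof is correct and follows exactly the route the paper intends: the paper omits this proof as a ``slight, simple variation'' of Claim~\ref{less-slots-opt}, and your argument is precisely that variation, with the one genuine subtlety (that non-extraneousness of $(\alpha,j)_q$ is a universal statement over all size-$q$ subsets and hence applies to the subset $Q'$ actually scheduled by $S'$) correctly identified and handled.
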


Now we state our main structural lemma for the setting when we only complete a subset of jobs. We omit some details of the proof as it is very similar to that of Lemma \ref{j-release-time}.

\begin{lemma}\label{j-release-time-subset}
For a feasible, $B$-capacity compatible, non-extraneous instance $(\alpha,j)_q$ of unit length jobs with agreeable deadlines, an ordered, contiguous $S \in \mathcal{S}(\alpha,j)_q$
has its last active time slot $\tau_\alpha$ at $r_{j'}$, for $j'$ the largest indexed job in $[j]$ scheduled in $S$.
\end{lemma}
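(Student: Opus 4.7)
The plan is to induct on $\alpha$, closely mirroring the proof of Lemma~\ref{j-release-time} but invoking the subset-aware versions of the supporting claims (Claims~\ref{less-jobs-opt-subset}, \ref{last-time-slot-subset}, \ref{less-slots-opt-subset}) and the subset-aware staircase result (Lemma~\ref{staircase-subset}). Throughout, let $j'$ denote the largest scheduled index in $S$.

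For the base case $\alpha=1$, non-extraneousness means no $q$-subset of $[j]$ has at most one distinct release time; since $S$ is ordered and schedules $q$ jobs in a single slot, that slot lies in $[r_{j'},d_{j'}]$, and minimum flow time forces $\tau_1 = r_{j'}$. For the inductive step, Lemma~\ref{staircase-subset} says the last slot of $S$ contains a contiguous interval $[b+1,j']$ with $b\in[j'-B,j'-1]$, and the restriction $S_1$ of $S$ to the first $\alpha-1$ slots lies in $\mathcal{S}(\alpha-1,b)_{q-(j'-b)}$ by optimality; one checks that $(\alpha-1,b)_{q-(j'-b)}$ inherits non-extraneousness from $(\alpha,j)_q$. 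If $r_b<r_{j'}$, then the IH applied to $S_1$ gives $\tau_{\alpha-1}=r_{b''}$ with $b''\le b$, hence $r_{b''}\le r_b<r_{j'}$, and Claim~\ref{last-time-slot-subset} yields $\tau_\alpha = r_{j'}$.

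The main work is to rule out $r_b=r_{j'}$ by exhibiting a strictly cheaper schedule $S^*$, which would contradict $S\in\mathcal{S}(\alpha,j)_q$. Take $b^* := j'-B$; by $B$-capacity compatibility at most $B-1$ jobs before $j'$ share release time $r_{j'}$, so $r_{b^*}<r_{j'}$, and in fact every index in $[b^*+1,j']$ has release time exactly $r_{j'}$ (indices $\leq j'$ have release time $\leq r_{j'}$ and indices $>b^*$ have release time $\geq r_{j'}$). Thus $[b^*+1,j']$ fits in a single slot at $r_{j'}$. To obtain feasibility of $(\alpha-1,b^*)_{q-(j'-b^*)}$, I observe that $S_1$ schedules $q-(j'-b)$ jobs from $[b]$, and at most $b-b^*$ of them can lie in $[b^*+1,b]$; hence at least $q-(j'-b)-(b-b^*) = q-(j'-b^*)$ of them lie in $[b^*]$, yielding the needed feasible sub-schedule in $\alpha-1$ slots by restriction.

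The cost comparison then follows the original. Claim~\ref{less-jobs-opt-subset}, applied with $s=j'-q$, gives $\textsf{OPT}(\alpha-1,b^*)_{q-(j'-b^*)} \leq \textsf{OPT}(\alpha-1,b)_{q-(j'-b)} - (b-b^*)$. Meanwhile the last-slot cost $c^*$ of $S^*$ equals $j'-b^*$ because all of $[b^*+1,j']$ are scheduled at their release time $r_{j'}$, whereas the last-slot cost of $S$ is $c=(j'-b)(\tau_\alpha - r_{j'}+1)$ with $\tau_\alpha>r_{j'}$; a direct computation shows $c^* < c + (b-b^*)$. Adding the two inequalities yields $\mathrm{cost}(S^*)<\mathrm{cost}(S)$, contradicting $S\in\mathcal{S}(\alpha,j)_q$. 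Hence $r_b<r_{j'}$, and as above $\tau_\alpha=r_{j'}$.

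The main obstacle I anticipate is the feasibility argument for the sub-instance $(\alpha-1,b^*)_{q-(j'-b^*)}$, which is not immediate in the subset setting the way it was in Lemma~\ref{j-release-time}. The counting argument above---bounding how many of the $S_1$-scheduled jobs can lie in the short window $[b^*+1,b]$---is the key new ingredient, and it relies crucially on both $B$-capacity compatibility and the bound $b\geq j'-B$ from the staircase lemma.
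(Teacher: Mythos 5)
Your overall strategy matches the paper's: induct on $\alpha$, decompose an ordered, contiguous optimal schedule into its last batch $[b+1,j']$ and a prefix schedule, and rule out $r_b=r_{j'}$ by exhibiting a cheaper schedule via Claims~\ref{less-jobs-opt-subset} and \ref{last-time-slot-subset}. Your counting argument for feasibility of the sub-instance $(\alpha-1,b^*)_{q-(j'-b^*)}$ is a nice explicit touch that the paper only asserts. However, there is a genuine error in your choice $b^*:=j'-B$ and in the accompanying assertion that ``every index in $[b^*+1,j']$ has release time exactly $r_{j'}$.'' Your parenthetical justification claims that indices $>b^*=j'-B$ have release time $\geq r_{j'}$, but nothing forces this: $B$-capacity compatibility only bounds how many jobs share the release time $r_{j'}$; it does not force the $B$ indices ending at $j'$ to all sit at $r_{j'}$. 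For example, with $B=3$ one can have $r_{j'}=r_{j'-1}=5$ but $r_{j'-2}=3$ with $d_{j'-2}=4$; then $[j'-B+1,j']$ contains a job that cannot legally be placed at $r_{j'}$ at all, so your $S^*$ is infeasible, and even when $S^*$ is feasible the identity $c^*=j'-b^*$ (on which your final inequality $c^*<c+(b-b^*)$ relies) fails because some jobs in the batch are not at their release times. The fix is the paper's choice: take $b^*=i_{j'}$, the largest index with $r_{b^*}<r_{j'}$. Then $b^*\geq j'-B$ follows from $B$-capacity compatibility (at most $B$ jobs, including $j'$, have release time $r_{j'}$), all of $[b^*+1,j']$ genuinely have release time $r_{j'}$ and hence deadline after it, and your feasibility counting and cost comparison go through verbatim with this $b^*$.

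A second, smaller gap: you assert that $(\alpha-1,b)_{q-(j'-b)}$ ``inherits non-extraneousness,'' but it need not---a $(q-(j'-b))$-subset of $[b]$ with at most $\alpha-1$ distinct release times does not contradict non-extraneousness of $(\alpha,j)_q$, since the last batch $[b+1,j']$ can contribute several additional distinct release times. The inductive hypothesis therefore does not directly apply in that case; as in the proof of Lemma~\ref{j-release-time}, the extraneous-prefix case must be dispatched separately (there the prefix can be scheduled entirely at release times strictly before $r_{j'}$, and Claim~\ref{last-time-slot-subset} again gives $\tau_\alpha=r_{j'}$).
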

\begin{proof}
 The proof follows by induction on $\alpha$; the base case is the same as Lemma \ref{j-release-time}.

Fix an ordered, contiguous $S \in \mathcal{S}(\alpha,j)_q$, 
which exists by Lemma \ref{staircase-subset}.
The set of jobs scheduled by $S$ at $\tau_\alpha$ must be a consecutive set of the form $[b + 1, j']$, where  $b \in [j'-B,j'-1 ]$ and $j' \leq j$ is the highest indexed job scheduled in $S$.
For $b'$ with $r_{b'} < r_{j'}$, Claim \ref{last-time-slot-subset} implies that $\tau_\alpha = r_{j'}$. 
Claim \ref{less-slots-opt-subset} implies that $\alpha-1$ active slots are better than $\alpha' < \alpha-1$ to schedule the remaining cost,
so the cost of such a schedule is $\textsf{OPT}(\alpha - 1, b')_{ q-(j'-b')}+c_{b'}$ where $c_{b'}$ is the cost of scheduling jobs $[b' + 1, j']$ at $r_{j'}$. 
Let $b^*$ be of maximal index so that $r_{b^*} < r_{j'}$ and
$(\alpha - 1, b^*)_{q-(j'-b^*)}$ is feasible.
Now consider $b''$ such that $r_{b''} = r_{j'}$. Then it must be the case that $\tau_\alpha > r_{j'}$. The cost of a schedule based on choosing $b''$ is $\textsf{OPT}(\alpha - 1, b'')_{ q-(j'-b'')} + c_{b''}$ where $c_{b''}$ is the cost of scheduling jobs $[b'', j']$ at $\tau_\alpha$. 
For any such $b''$ and for our specially chosen $b^*$, we see that  $b^* < b''$ and so we can apply Claim~\ref{less-jobs-opt-subset} to see
that 
$$\textsf{OPT}(\alpha - 1, b^*)_{ q-(j'-b^*)}\leq \textsf{OPT}(\alpha - 1, b'')_ {q-(j'-b'')} - (b''-b^*).$$ 
Further, $c_{b^*} < c_{b''} +(b''-b^*)$.
Adding these two inequalities,
$$\textsf{OPT}(\alpha - 1, b^*)_{q-(j'-b^*)} + c_{b^*} < \textsf{OPT}(\alpha - 1, b'')_{ q-(j'-b'')} + c_{b''},$$
which shows the optimal schedule chooses $b$ such that $r_b < r_j$.
\noindent 
\end{proof}

Overall, the proof of correctness for the DP is the same as Theorem \ref{main}, just with the new analogous lemmas. The DP in its entirety is below.

\begin{proof}[DP proving Theorem \ref{subset}]
Let $(k,n)$ be an instance of unit length jobs with agreeable deadlines, where jobs are ordered by release times, and w.l.o.g. assume that $(k,n)$ is $B$-capacity compatible. Fix $0 \leq m \leq n$. For a fixed job $\ell$, let $i_\ell$ be the largest job with $r_{i_\ell} < r_\ell$.
We formally state the DP for  $0 \leq \alpha \leq k$, $0 \leq j \leq n$, and $0 \leq q \leq j,m$:\\
If $(\alpha,j)_q$ has $B \cdot \alpha < q$, then  $\textsf{OPT}(\alpha,j)_q = \infty$. \\
If  $(\alpha,j)_q$ is extraneous, then $\textsf{OPT}(\alpha,j)_q =q $.\\
If $(\alpha,j)_q$ has $B \cdot \alpha \geq q$ and is non-extraneous, then:
\begin{align*}
       \textsf{OPT}(\alpha,j)_q = \min_{ \ell \in [q,j] } \min_{\underset{ d_{b+1} > r_\ell}{b \in [\ell-B,i_\ell]}} \big ( &\textsf{OPT}(\alpha-1,b)_{ q-(\ell-b)} \\
     &+\sum_{u=b+1}^{\ell} \left (r_{\ell}-r_{u} +1 \right) \big ).
\end{align*}

As before, the DP will certify infeasible $(k,n)_m$ are infeasible.
For feasible $(k,n)_m$, the recurrence for the minimum flow time also constructs a feasible schedule obtaining that flow time. The runtime and space complexity now include an extra factor of $(n - m)$ from the minimization over $\ell$ which decides which jobs are not completed.
\end{proof}

\section{Uniform jobs, agreeable deadlines}\label{sec: uniform}

In this section, we prove Theorem \ref{uniform}, which applies to uniform jobs with length $p \in \mathbb{N}$ and agreeable deadlines. 
Here, there is less structure to use for the problem, which forces us to consider more possibilities for where to schedule active time slots. 
This leads to a runtime that is better than augmenting Baptiste's DP for arbitrary deadlines, but worse then our unit length setting. 
We assume the processor performs synchronous batching, where no batch can start until the last is finished, even if the previous batch was not full.

There are several obstacles to adapting our previous DP formulation for this case. First, it is no longer sufficient to start active batches only at the release times of jobs, because it may be the case that a release time falls within the processing time of another batch. 
It is also not obvious how to pre-process jobs such that they can be scheduled at their release times.
Second, it is no longer true that it is at least as good for flow time to use $\alpha$ non-empty active batches instead of $\alpha' < \alpha$ non-empty active batches. Recall that Claim \ref{less-slots-opt} showed that this was the case for unit jobs, assuming the number of unique release times in $[j]$ is at least $\alpha$. This is because the optimal start time of one batch may overlap the processing time of another batch, and so it is not always possible to start a new batch between two others, even if the active slot budget is increased. Overall, we forego pre-processing here and instead add an additional parameter, $t$, that keeps track of the first of the last $p$ time slots used by job $j$ in scheduling $[j]$. Similarly we forego the notion of extraneous here, as even if there are enough active slots to place one at every distinct release time, this does not necessarily lead to a feasible schedule, even if every time slot did have at most $B$ jobs released.

We say an instance $(\alpha,j,t)$ is feasible if all jobs in $[j]$ can be scheduled in at most $\alpha$ active batches (i.e., $\alpha \cdot p$ active time slots), where the last $p$ active time slots cover $[t,t+p)$.
For feasible $(\alpha,j,t)$, $\textsf{OPT}(\alpha,j,t)$ is the minimum flow time of a schedule with at most $\alpha$ active batches that schedules $[j]$ and has its last $p$ active slots in $[t,t+p)$; let $\mathcal{S}(\alpha,j,t)$ be the set of optimal schedules. If $(\alpha,j,t)$ is infeasible, then $\textsf{OPT}(\alpha,j,t)=\infty$.
We have that $\textsf{OPT}(\alpha,j) = \min_t \textsf{OPT}(\alpha,j,t)$, where if $(\alpha,j)$ if feasible, this quantity is finite, and infinite otherwise.
Recall jobs are ordered by release times.

Importantly, note that Lemma \ref{staircase} applies for uniform (not just unit) jobs. Therefore if $(\alpha,j,t)$ is feasible, then an ordered $S \in \mathcal{S}(\alpha,j,t)$ exists.

\begin{proof}[Proof of Theorem \ref{uniform}]
Let $(k,n)$ be an instance of length $p$ jobs with agreeable deadlines, where jobs are ordered by release times.
Let  $T = \{r_j + p\cdot u\}$ for $j\in [n]$ and $0 \leq u \leq n$.
Then the following holds for all $t \in T \cup \{\min(T)-p\}$, $0 \leq \alpha \leq k$, and $0 \leq j \leq n$:\\
If $j > 0$ and either $t < r_j$ or $t+p > d_j$, then $\textsf{OPT}(\alpha,j,t)=\infty$.\\
If $B \cdot \alpha  < j$, then $\textsf{OPT}(\alpha,j,t)=\infty$.\\
For $\alpha \geq 0$ and $t \geq -p+1$, $\textsf{OPT}(\alpha,0,t)=0$.\\
Otherwise,
\begin{align*}
\textsf{OPT}(\alpha,j,t) =  \hspace{-4mm} \min_{\underset{ d_{b+1} \geq t+p}{b \in [j-B,j-1]:}}   \min_{t' \leq t-p} & 
\bigg ( \hspace{-.5mm}\textsf{OPT}(\alpha-1,b,t')
 +\hspace{-2mm} \sum_{u=b+1}^j (t-r_{u}+p) \hspace{-.5mm}\bigg).
\end{align*}
Then
$\textsf{OPT}(\alpha,j) =  \min_t  \textsf{OPT}(\alpha,j,t)$.

We begin with the base cases of our dynamic program. Clearly the first case is infeasible as job $j$ cannot be scheduled feasibly.
If $B \cdot   \alpha  < j$, the active slots do not have enough volume to schedule $[j]$. In both cases,  $\textsf{OPT}(\alpha,j) = \infty$.
Fix $\alpha$ and $j$ such that $0 \leq \alpha \leq k$ and $0 \leq j \leq n$ such that neither of the base cases hold.

Suppose that $(\alpha,j)$ is feasible.
Fix an ordered $S \in \mathcal{S}(\alpha,j)$, which exists by Lemma \ref{staircase}.
Then there exists some $p$ time slots where job $j$ is scheduled in $S$. Let $t$ be the first of these slots, so we consider the subproblem $(\alpha,j,t)$. Job $j$ is scheduled in a job set $[b+1,j]$, for $b \in [j-B,j-1]$, in $S$. The rest of the jobs, $[b]$, use at most $\alpha-1$ batches. By the definition of ordered, jobs in $[b]$ will not use a time slot later than $t-1$ and jobs in $[b+1,j]$ only use time slots $[t,t+p)$. The optimal subschedule for jobs in $[b]$ has flow time $\textsf{OPT}(\alpha-1,b,t')$, for $b$ with $r_b \leq t' \leq t-p$, $d_{b+1} \geq t+p$, and $d_b \geq t' + p$.
The flow time of jobs $[b+1,j]$ is $ \sum_{u = b+1}^j (t-r_u+p)$. Taking the minimum over all choices of $b$ and $t'$ gives the recurrence. 

The number of active batches decreases in each subschedule, so it remains to see that the finite base case is reached. Let $\beta \leq \alpha$ be the number of non-empty active batches used by $S$. For $i \in [\beta]$, let $t_i$ be the time where the $((i-1) \cdot p+1)$st active slot time is and let $b_i$ be the largest indexed job scheduled at $t_i$. Note that $b_s=j$. Also, take $t_0= t_1-p$ and $b_0=0$, where the latter occurs since there are no jobs less than $b_1$ that are not scheduled in the same active time slots as $b_1$.
The sequence of subproblems $\textsf{OPT}( i+\alpha-\beta,b_i,t_i)$ terminates in $\textsf{OPT}(\alpha-\beta,0,t_1-p)$, where $t_1-p \geq 1-p$.

If $(\alpha,j)$ is infeasible, then as in the proof of Theorem \ref{main}, a base case of infinite cost is caught.
The runtime of the dynamic program is $O(B \cdot k \cdot n^5)$, and its space complexity is $O(k \cdot n^3).$ 
The runtime and space complexity includes iterating over all values of $\alpha$ from $1$ to $k$, jobs from $1$ to $n$, and values of $T$ with $|T| = n^2$ in the parameters of \textsf{OPT}. In addition, the inner minimization of $b$ and $t'$ contributes an extra factor of $B \cdot n^2$ to the runtime.
Again, the DP constructs a schedule obtaining the minimum flow time if $(k,n)$ is feasible, and returns $\infty$ otherwise.
\end{proof}

\subsection*{Completing a subset of uniform jobs}\label{sec: thm4}
In this section, we justify that the DP in Theorem \ref{uniform} can be extended to when only $m \leq n$ jobs must be completed.

\begin{thm}\label{uniform-subset}
Let $(k,n)$ be an instance of uniform length jobs with agreeable deadlines. Fix $0 \leq m \leq n$. Then one can either certify $(k,n)_m$ is infeasible or find a schedule for $(k,n)_m$ that minimizes the flow time with a dynamic program in time $O(B \cdot k \cdot n^5 \cdot (n-m))$ and space $O(k \cdot n^3 \cdot (n-m))$.
\end{thm}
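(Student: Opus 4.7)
The plan is to combine the subset-counting technique of Theorem~\ref{subset} with the uniform-length DP of Theorem~\ref{uniform}. Specifically, I will introduce a parameter $q$ that counts how many jobs from $[j]$ are actually scheduled, giving a DP over states $(\alpha, j, t)_q$; the definition mirrors that of $(\alpha, j, t)$ from Theorem~\ref{uniform} but restricts to schedules that complete exactly $q$ jobs from $[j]$. The overall answer is the minimum of $\textsf{OPT}(\alpha, j, t)_q$ over $t$, combined with the option of skipping job $j$ (which reduces to $\textsf{OPT}(\alpha, j-1)_q$), in direct analogy with how Theorem~\ref{subset} handles skipping via its outer parameter $\ell$ (the largest scheduled index).

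The structural engine is Lemma~\ref{staircase-subset}, which is already stated for uniform jobs (not just unit) and guarantees an ordered, contiguous optimal schedule in which every active batch handles a contiguous index interval of size at most $B$. Applied inside state $(\alpha, j, t)_q$ with $j$ scheduled in the last batch at $[t, t+p)$, this forces that batch to be $[b+1, j]$ for some $b \in [j-B, j-1]$ with $d_{b+1} \geq t + p$, and forces the remaining $q - (j-b)$ scheduled jobs to be drawn from $[b]$ and placed in $\alpha - 1$ batches whose last batch starts at some $t' \leq t - p$. The resulting recurrence is a direct analogue of the one from Theorem~\ref{uniform}, now calling the subset-aware subproblem $\textsf{OPT}(\alpha - 1, b, t')_{q - (j - b)}$ and iterating additionally over $\ell$ (or equivalently over whether $j$ is skipped) as in Theorem~\ref{subset}. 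Correctness then reduces, as in the earlier theorems, to the structural lemma plus checking that each base case ($t < r_\ell$, $t + p > d_\ell$, or $B \alpha < q$) correctly detects infeasibility.

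For the complexity analysis, as in Theorem~\ref{subset}, only values of $q$ with $j - q \leq n - m$ are relevant (at most $n - m$ jobs can ever be skipped), so the parameter $q$ multiplies the state count of Theorem~\ref{uniform} by $O(n - m)$, yielding space $O(k \cdot n^3 \cdot (n - m))$ and runtime $O(B \cdot k \cdot n^5 \cdot (n - m))$. I expect the main subtlety to be handling the fact that the ``last batch time'' $t$ must refer to the largest \emph{scheduled} job $\ell$ rather than the nominal parameter $j$; this requires a careful outer minimization over $\ell$ (equivalently, toggling between skipping or scheduling $j$) and verifying that the swap-based argument of Lemma~\ref{staircase-subset} goes through without any appeal to $p = 1$, which it does because the swap only uses agreeable deadlines and uniform job lengths.
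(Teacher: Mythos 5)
Your proposal is correct and follows essentially the same route as the paper: it augments the $(\alpha,j,t)$ states of Theorem~\ref{uniform} with the subset parameter $q$, invokes the ordered--contiguous structural guarantee of Lemma~\ref{staircase-subset} (which indeed holds for uniform jobs), and handles the largest scheduled index via an outer minimization over $\ell$ --- exactly the recurrence $\textsf{OPT}(\alpha-1,b,t')_{q-(\ell-b)}$ plus the batch cost that the paper uses, with the same $(n-m)$ accounting for the extra parameter. No substantive differences to report.
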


As in the setting where we complete a subset of unit length jobs, for $(\alpha,j)_q$ feasible, the proof of existence of an ordered $S \in \mathcal{S}(\alpha,j)_q$ is the same as in Lemma \ref{staircase}, but we require the extra notion of contiguous. It is easy to see that the same procedure works, guaranteeing that in every active batch $\tau_\beta$, for $1 \leq \beta \leq \alpha$, the set of jobs scheduled in batch $\tau_\beta$ is a contiguous set of indexed jobs with size at most $B$. Note that $q$-subscript indexing is used anywhere we consider solutions where $q$ of the $j$ jobs should be selected.

\begin{proof}[Proof of Theorem \ref{uniform-subset}]
Let $(k,n)$ be an instance of length $p$ jobs with agreeable deadlines, where jobs are ordered by release times.
Fix $0 \leq m \leq n$.
For a fixed job $\ell$, let $i_\ell$ be the largest job with $r_{i_\ell} < r_\ell$. 
Let  $T = \{r_j + p\cdot u \}$ for $j\in [n]$ and $0 \leq u \leq n$. Then for all $t \in T$, $0 \leq \alpha \leq k$, $0 \leq j \leq n$, and $0 \leq q \leq m,j$ the following holds:\\
If $j >0$ and either $t < r_j$ or $t+p > d_j$, then $\textsf{OPT}(\alpha,j,t)_q=\infty$.\\
If $B \cdot  \alpha < q$, then $\textsf{OPT}(\alpha,j,t)_q=\infty$. \\
For $\alpha,j \geq 0$ and $t \geq -p+1$, $\textsf{OPT}(\alpha,j,t)_0=0$.\\

Otherwise,
\begin{align*}
\textsf{OPT}(\alpha,j,t)_q = \min_{\ell \in [q,j]} \min_{\underset{ d_{b+1} \geq t+p}{b \in [\ell-B,i_\ell]:}} \quad  \min_{t' \leq t-p} & 
\bigg (\textsf{OPT}(\alpha-1,b,t')_{q-(\ell-b)}\\
& + \sum_{u=b+1}^\ell (t-r_{u}+p)\bigg).
\end{align*}
Then
$\textsf{OPT}(\alpha,j)_q =  \min_t  \textsf{OPT}(\alpha,j,t)_q$.

The statements clearly hold except for the recurrence.
Fix such an $\alpha$, $j$, and $q$.
Suppose that $(\alpha,j)_q$ is feasible, and fix an optimal, ordered, contiguous schedule $S \in \mathcal{S}(\alpha,j)$.
Let $\ell$ be the highest indexed job in $[j]$ scheduled in $S$. Then there exists some $p$ time slots in $S$ where job $\ell$ is scheduled. Let $t$ be the first of these slots. Therefore job $\ell$ is scheduled with a job set $[b+1,\ell]$ for $b \in [\ell-B,\ell-1]$. The rest of the jobs, $[b]$, use at most $\alpha-1$ batches. Jobs in $[b]$ will not use a time slot later than $t-1$ and jobs in $[b+1,\ell]$ only use time slots $[t,t+p]$. The optimal subschedule for jobs in $[b]$ has flow time $\textsf{OPT}(\alpha-1,b,t')_{q - (\ell -b)}$, for $b$ with $r_b \leq t' \leq t-p$, $d_{b+1} \geq t+p$, $d_b \geq t' + p$, and $0 \leq \alpha' \leq \alpha-1$.
The flow time of jobs $[b+1,\ell]$ is $\sum_{u = b+1}^\ell (t-r_u+p)$. Taking the minimum over all choices of $b$, $t'$, and $\ell$ gives the recurrence. 

The rest of the proof follows exactly as in the proof of Theorem \ref{uniform}, except using that the separators begin with $b_s=\ell$.

The DP runs in time $O(B \cdot k \cdot n^5 \cdot (n-m))$ and space $O(k \cdot n^3 \cdot (n-m))$, and identifies infeasible $(k,n)_m$.

\end{proof}

\section{Non-agreeable Deadlines}\label{Baptiste-sec}

Here, we justify how Baptiste's framework can be modified to handle a budget of active time slots. We omit details, as the arguments almost exactly follow those presented by Baptiste. The reader can refer to Baptiste's work for details \cite{Baptiste00}.

Baptiste's DP keeps track of several more variables than the DPs we have presented so far. Recall $p$ is the length of the jobs. Deadlines are not required to be agreeable, and here, jobs are ordered by deadline. For uniform jobs, Baptiste justifies that it suffices to only consider optimal schedules that are ordered (where the definition of ordered is now with respect to deadlines instead of release times) and that have all jobs scheduled in the set $T$, where $T = \{r_j + p\cdot i\}_{j \in [n],i \in \{0,...,n\} }$. Note that $|T| = n^2$. We call the set of times where jobs can be scheduled the set of \emph{interesting times}. If a job is not scheduled at its release time, then it might be scheduled at another job's release time, or other batches of jobs are using the release times, and so we must wait for them to be completed before starting other batches. Since we want to minimize flow time, batches of the latter form start some multiple of $p$ after a release time.
We use the same algorithm for the unit and uniform jobs settings, but when jobs are unit length and $B$-capacity compatible, we can reduce the set of interesting times from $T$ to $T_1 = \{r_j\}_{j \in [n]}$. The set of interesting times thus reduces in size from $n^2$ to size $n$, which improves the runtime.  

\begin{lemma}
Fix $\alpha >1$.
For a feasible, $B$-capacity compatible instance $(\alpha,j)$  of unit length jobs, 
whose set of release times is $T_1$,  every optimal schedule $S \in \mathcal{S}(\alpha,j)$ has active slots only in $T_1$.
\end{lemma}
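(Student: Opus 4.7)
My plan is a strict-improvement exchange argument. Suppose for contradiction that some optimal $S \in \mathcal{S}(\alpha,j)$ has an active slot $\tau \notin T_1$. I will construct a modification of $S$ with strictly smaller flow time that uses no more than $\alpha$ active slots, contradicting the optimality of $S$. The driving observation is that $\tau$ is not a release time and release times are integers, so every job in $J_\tau$ satisfies $r_j \leq \tau - 1$, and any such job can be legally scheduled at any integer slot in $[r_j, \tau - 1]$.

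I would first run a sequence of easy reductions by walking leftward from $\tau$. Whenever I find a slot $t < \tau$ that is either inactive or active with fewer than $B$ jobs, and some job of $J_\tau$ has $r_j \leq t$, I slide that job into $t$ (or slide all of $J_\tau$ into $t$ if $t$ is inactive), strictly decreasing flow time with no increase in the active-slot count. Iterating, I reach a contiguous chain $[\sigma, \tau]$ of active, capacity-$B$ slots with $\sigma - 1$ inactive. A further reduction handles $\sigma \notin T_1$: no job has $r_j = \sigma$, so the entire chain can be shifted left by one unit, again strictly improving flow time. So I may assume $\sigma \in T_1$.

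The main obstacle is this reduced case. Here I would invoke a counting argument powered by $B$-capacity compatibility: the chain contains $(\tau - \sigma + 1)B$ jobs, yet the number of jobs with release time in $[\sigma, \tau]$ is at most $|T_1 \cap [\sigma, \tau]| \cdot B \leq (\tau - \sigma) B$, where the last inequality uses $\tau \notin T_1$. Hence at least $B$ jobs in the chain have $r_j < \sigma$, and each of them can legally sit at slot $\sigma - 1$. I would pull $B$ such ``early'' jobs to form a new full batch at the previously-inactive slot $\sigma - 1$, and repack the remaining $(\tau - \sigma) B$ jobs into the $\tau - \sigma$ slots $[\sigma, \tau - 1]$. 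The active-slot count is preserved (gain $\sigma - 1$, lose $\tau$), each of the $B$ pulled-back jobs strictly decreases its flow time by at least one, and no other job's flow time increases, so the net flow time strictly drops.

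The delicate technical step is certifying the feasibility of this repacking. A similar counting identity shows that for every subinterval $[s_1, s_2] \subseteq [\sigma, \tau - 1]$, the number of remaining jobs eligible to land in $[s_1, s_2]$---i.e., those with $r_j \leq s_2$ and original slot $t_j \geq s_1$---is at least $(s_2 - s_1 + 1) B$. Hall's condition on the resulting interval-bipartite graph yields an assignment in which each remaining job is placed at a new slot in $[r_j, t_j]$, so release times are respected and the invariant $s_j \leq t_j$ inherits deadline feasibility from $S$. Because any choice of $\tau \notin T_1$ in an optimal schedule gives rise to such a strict improvement, every $S \in \mathcal{S}(\alpha,j)$ must place its active slots entirely inside $T_1$.
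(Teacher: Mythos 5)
Your route differs from the paper's. The paper's (admittedly terse) argument takes the earliest offending slot $\tau$, moves all of $J_\tau$ to $\tau'=\max_{i\in J_\tau} r_i$, and cascades any overflow leftward to earlier release times, terminating by $B$-capacity compatibility; yours instead localizes to the maximal contiguous active chain $[\sigma,\tau]$, counts release times against capacity, and repacks via a Hall-type condition. Your core counting step is sound and, if anything, more carefully justified than the paper's sketch: since $\tau\notin T_1$, at most $(\tau-\sigma)B$ of the chain's jobs have release times in $[\sigma,\tau]$, so a full chain must contain $B$ jobs releasable by $\sigma-1$, and the interval/Hall condition you state does certify the repacking into $[\sigma,\tau-1]$.

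However, two of your reductions do not work as stated. First, the claim that $\sigma\notin T_1$ lets you shift the entire chain left by one unit is false: a job scheduled at an interior slot $s\in(\sigma,\tau)$ with $r_j=s$ blocks the shift. Fortunately this reduction is unnecessary, since your inequality $|T_1\cap[\sigma,\tau]|\le \tau-\sigma$ uses only $\tau\notin T_1$; just drop the assumption $\sigma\in T_1$. Second, and more substantively, your leftward walk moves only jobs of $J_\tau$, so it does not actually establish the reduced case in which every slot of $[\sigma,\tau]$ is full: a non-full interior slot $t$ may be unreachable by every job of $J_\tau$ (all of whose release times could lie in $(t,\tau)$) yet reachable by a job sitting at an intermediate slot, and if the walk stalls there your count of $(\tau-\sigma+1)B$ jobs fails. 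The fix is your own counting trick applied locally: if $t^*$ is the rightmost non-full active slot in $[\sigma,\tau-1]$, then the slots $(t^*,\tau]$ hold at least $(\tau-t^*-1)B+1$ jobs while at most $(\tau-t^*-1)B$ jobs have release times in $(t^*,\tau]$ (again because $\tau\notin T_1$), so some job scheduled in $(t^*,\tau]$ can be moved into $t^*$, a strict improvement. Relatedly, when $t$ is inactive you may only slide all of $J_\tau$ into $t$ if every job of $J_\tau$ has $r_j\le t$, which is guaranteed only for $t=\tau-1$; sliding a proper subset into an inactive slot would increase the active-slot count. With these repairs your argument is complete and arguably more rigorous than the proof given in the paper.
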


\begin{proof}
Assume for sake of deriving a contradiction that there is a schedule $S \in \mathcal{S}(\alpha,j)$ that has an active time slot outside of $T_1$. Choose the earliest such active slot and denote this slot by $\tau$. Let the set of jobs scheduled at $\tau$ be $J_\tau$. 

We can modify $S$ with the following algorithm. Let $\tau' \gets \max_{j \in J_\tau} r_j$ be the latest release time of any job in $J_\tau$. We shift all jobs from $\tau$ to $\tau'$, so the set of jobs scheduled at $\tau'$ and $\tau$ are updated to  $J_{\tau'} \gets J_{\tau'} \cup J_{\tau}$ and $J_\tau=\emptyset$. The flow time of $S$ would only decrease by such a shift. If $|J_{\tau'}| >B$, we need to continue shifting jobs to the left, which can be done by keeping the $B$ jobs with the latest release time scheduled at $\tau'$ and shifting the rest of the jobs in $J_{\tau'}$ to the latest release time available. This process terminates since the instance has no more than $B$ jobs with the same release time. 
\end{proof}

Our notation is chosen to match Baptiste's as much as possible.
We will consider times $t_l,t_r \in T$ in the uniform case and $t_l,t_r \in T_1$ in the unit case. For a specific $t_l, t_r$, we consider the interval between these endpoints. In this time interval, we consider the set of jobs below a certain index whose release time lies in this interval and let this set be $U_j(t_l,t_r) = \{j' \mid j' \leq j, r_{j'} \in (t_l,t_r]\}$. At the rightmost $p$ length slot in the interval of consideration, $[t_r,t_r+p)$, we keep track of the space available with parameter $0 \leq \mu_r \leq B$. The DP will call upon the sub-problems $\textsf{OPT}(t_l,t_r,\mu_r,\alpha,j)$, which gives the flow time of an optimal schedule for jobs in $U_{j}(t_l,t_r)$ using $\alpha$ active batches in the interval $[t_\ell+p,t_r+p)$ with slots $[t_r,t_r+p)$ having $\mu_r$ space available. 
In the uniform case, we can modify Baptiste's DP to produce a schedule with cost $\textsf{OPT}(\min(T) - p,\max(T) + p,0,k,n)$. Note that $\min(T) - p, \max(T) + p$ bookend all possible times at which jobs can be scheduled and all release times, so this represents the full problem. However, $\max(T) + p$ is not itself in $T$, so the space available at $\mu_r$ is $0$.
The same line of reasoning holds in the unit jobs setting for $\textsf{OPT}(\min(T_1) - 1,\max(T_1) + 1,0,k,n)$.

\begin{figure}
    \centering
    \includegraphics[width=8cm]{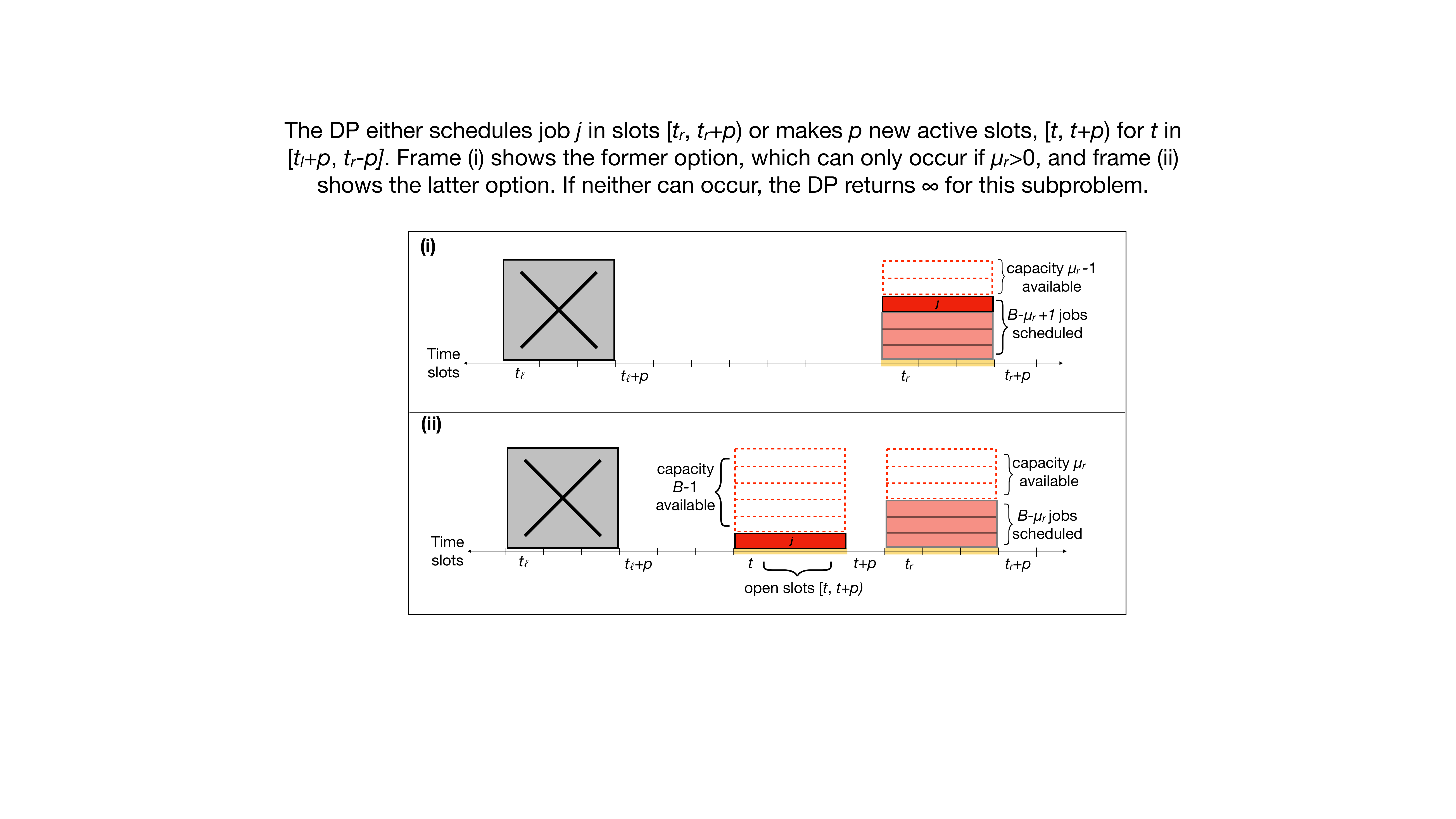}
    \caption{The DP either schedules job j in slots $[t_r, t_r+p)$ (this is subproblem $R$) or makes $p$ new active slots, $[t, t+p)$ for $t$ in $[t_\ell+p, t_r-p]$ (this is subproblem $I$). Frame (i) shows the former option, which can only occur if $\mu_r>0$, and frame (ii) shows the latter option. If neither can occur, the DP returns $\infty$ for this subproblem.}
    \label{fig:Baptiste}
\end{figure}

\begin{proof}[Proof sketch of Theorem \ref{thm:Baptiste-ext}]
We write the DP for uniform jobs. One can write the unit jobs DP by replacing $T$ with $T_1$ and $p$ with 1.

At each recursive step, the DP decides whether to begin processing job $j$ at the rightmost time slot in the active interval ($t_r$) or at an in-between time slot $t \in[t_l+p,t_r-p]$; the former will be referred to as subproblem $R$ and the latter as subproblem $I$. See Figure \ref{fig:Baptiste} for an illustration.
Recall that because jobs have length $p$, when a new time slot opens, we actually open up $p$ contiguous slots for the synchronous batching.
Note we do not consider scheduling job $j$ before $t_l+p$ because jobs from a subproblem with $ t_\ell -p < t' \leq t_{\ell}$ as its right endpoint may be using slots in $[t_\ell, t_{\ell}+p)$. The DP can only process $j$ in $[t_r,t_r+p)$, i.e. compute subproblem $R$ with finite cost, if there is space available. Additionally, in $I$ the DP must decide how to partition the $\alpha$ active batches between $[t_l+p,t+p)$ and $[t+p,t_r+p)$. For $\alpha_1$ the maximum number of active batches allowed in $[t_l+p,t+p)$ and $\alpha_2$  maximum number of active batches allowed in $[t+p,t_r)$, we must have that $\alpha_1 + \alpha_2 \leq \alpha$, and also $\alpha_1 \geq 1$, as job $j$ will be scheduled at slots $[t, t + p)$. 

Initialize $I=\infty$ and $R = \infty$. 
Overall, we have the following for  $t_l\leq t_r \in T$, $0 \leq \mu_r \leq B$, $j \in [n]$, $\alpha \in [k]$:
$$
\textsf{OPT}(t_l,t_r,\mu_r,\alpha,j) = \min (I,R), 
$$
where if $\mu_r=0$ then $R = \infty$ and otherwise
$$
R= \textsf{OPT}(t_l,t_r,\mu_r-1,\alpha,j-1) + (t_r+p - r_j),
$$
and for $T_j = T \cap [t_l + p,t_r-p] \cap [r_j,d_j]$---the set of time slots under consideration for beginning to schedule job $j$, given that $j$ is not scheduled in $[t_r,t_r+p)$
\begin{align*}
 I = \min  \bigg (I, \quad \min_{t \in T_j}  \min_{ \underset{\alpha_1+\alpha_2 = \alpha, 1 \leq \alpha_1\leq \alpha}{ \alpha_1,\alpha_2 :}} \textsf{OPT}(t_l,t,B-1,\alpha_1,j-1) +\\ (t+p-r_j) + \textsf{OPT}(t,t_r,\mu_r,\alpha_2,j-1)
 \bigg ).
\end{align*}

The runtime is $O(B \cdot k^2 \cdot n \cdot |T|^3)$ and the space complexity $O( B \cdot k \cdot n \cdot |T|^2)$. 
The runtime and space complexity includes iterating over all values of $t_l \in T$, $t_r \in T$, $\mu_r \in [B]$, $\alpha \in [k]$, and jobs from $1$ to $n$ in the parameters of \textsf{OPT}. In addition, the inner minimization over $t$ and $\alpha_1$ contributes an extra factor of $|T| \cdot k$ to the runtime.
For uniform jobs, this gives runtime $O(B \cdot k^2 \cdot n^7)$ and the space complexity $O( B \cdot k \cdot n^5)$.
and for unit jobs, the runtime is $O(B \cdot k^2 \cdot n^4)$ and the space complexity $O( B \cdot k \cdot n^3)$ . 
 
\end{proof}

This DP exemplifies that while one can adapt Baptiste's framework to include active time slots, it can be expensive without the additional structural properties guaranteed by the agreeable deadlines assumption.

\section{Conclusion and Future Work}\label{sec: conclusion}

We showed a fast dynamic programming algorithm with runtime $O(B \cdot k \cdot n)$ for minimizing the flow time of scheduling $n$ unit jobs with agreeable deadlines on a single processor that can process up to $B$ jobs at once given a budget of $k$ active batches. For the more general uniform jobs setting, our DP has runtime $O(B \cdot k \cdot n^5)$. Additionally, we can modify the DPs for the setting when the algorithm is allowed to only schedule $m \leq n$ jobs.
This is the first work that balances the wins between flow time and active time minimization, which intuitively work against each other.

For our techniques, the agreeable deadlines assumption lends to substantially better runtimes, as we are able to find a total ordering on the set of jobs that determines their scheduling order.
However, it would be interesting to see whether algorithms exist that can quickly solve the settings with arbitrary deadlines either optimally or approximately, as the modified Baptiste algorithm is prohibitively slow.
We discussed the challenge behind modifying the the Lazy Activation algorithm to obtain a solution with better flow time, but this could be a possible approach as well. If one considered arbitrary length jobs, then an LP based approach could also be interesting, but the integrality gap for LPs containing active time constraints is currently 2. So given our current understanding, one would have to be willing to lose some optimality in the number of active time slots in order to obtain flow time guarantees with this approach.

{\bf Acknowledgements:}
We are grateful to Dr. Jessica Chang for valuable early discussions on this problem and to Benjamin G. Schiffer for discussions on the problem and helpful comments on earlier versions of the paper.


\begin{thebibliography}{36}


\ifx \showCODEN    \undefined \def \showCODEN     #1{\unskip}     \fi
\ifx \showDOI      \undefined \def \showDOI       #1{#1}\fi
\ifx \showISBNx    \undefined \def \showISBNx     #1{\unskip}     \fi
\ifx \showISBNxiii \undefined \def \showISBNxiii  #1{\unskip}     \fi
\ifx \showISSN     \undefined \def \showISSN      #1{\unskip}     \fi
\ifx \showLCCN     \undefined \def \showLCCN      #1{\unskip}     \fi
\ifx \shownote     \undefined \def \shownote      #1{#1}          \fi
\ifx \showarticletitle \undefined \def \showarticletitle #1{#1}   \fi
\ifx \showURL      \undefined \def \showURL       {\relax}        \fi
\providecommand\bibfield[2]{#2}
\providecommand\bibinfo[2]{#2}
\providecommand\natexlab[1]{#1}
\providecommand\showeprint[2][]{arXiv:#2}

\bibitem[\protect\citeauthoryear{Albers, M{\"{u}}ller, and Schmelzer}{Albers
  et~al\mbox{.}}{2014}]%
        {AlbersMS14}
\bibfield{author}{\bibinfo{person}{Susanne Albers}, \bibinfo{person}{Fabian
  M{\"{u}}ller}, {and} \bibinfo{person}{Swen Schmelzer}.}
  \bibinfo{year}{2014}\natexlab{}.
\newblock \showarticletitle{Speed Scaling on Parallel Processors}.
\newblock \bibinfo{journal}{\emph{Algorithmica}} \bibinfo{volume}{68},
  \bibinfo{number}{2} (\bibinfo{year}{2014}), \bibinfo{pages}{404--425}.
\newblock
\urldef\tempurl%
\url{https://doi.org/10.1007/s00453-012-9678-7}
\showDOI{\tempurl}


\bibitem[\protect\citeauthoryear{Angel, Bampis, and Chau}{Angel
  et~al\mbox{.}}{2014}]%
        {AngelBC14}
\bibfield{author}{\bibinfo{person}{Eric Angel}, \bibinfo{person}{Evripidis
  Bampis}, {and} \bibinfo{person}{Vincent Chau}.}
  \bibinfo{year}{2014}\natexlab{}.
\newblock \showarticletitle{Low complexity scheduling algorithms minimizing the
  energy for tasks with agreeable deadlines}.
\newblock \bibinfo{journal}{\emph{Discret. Appl. Math.}}  \bibinfo{volume}{175}
  (\bibinfo{year}{2014}), \bibinfo{pages}{1--10}.
\newblock
\urldef\tempurl%
\url{https://doi.org/10.1016/j.dam.2014.05.023}
\showDOI{\tempurl}


\bibitem[\protect\citeauthoryear{Antoniadis, Garg, Kumar, and Kumar}{Antoniadis
  et~al\mbox{.}}{2020}]%
        {AntoniadisGargKumarKumar20}
\bibfield{author}{\bibinfo{person}{Antonios Antoniadis},
  \bibinfo{person}{Naveen Garg}, \bibinfo{person}{Gunjan Kumar}, {and}
  \bibinfo{person}{Nikhil Kumar}.} \bibinfo{year}{2020}\natexlab{}.
\newblock \showarticletitle{Parallel Machine Scheduling to Minimize Energy
  Consumption}. In \bibinfo{booktitle}{\emph{Proceedings of the 2020 {ACM-SIAM}
  Symposium on Discrete Algorithms, {SODA} 2020, Salt Lake City, UT, USA,
  January 5-8, 2020}}, \bibfield{editor}{\bibinfo{person}{Shuchi Chawla}}
  (Ed.). \bibinfo{publisher}{{SIAM}}, \bibinfo{pages}{2758--2769}.
\newblock
\urldef\tempurl%
\url{https://doi.org/10.1137/1.9781611975994.168}
\showDOI{\tempurl}


\bibitem[\protect\citeauthoryear{Aupy, Benoit, Dufoss{\'{e}}, and Robert}{Aupy
  et~al\mbox{.}}{2011}]%
        {AupyBDR11}
\bibfield{author}{\bibinfo{person}{Guillaume Aupy}, \bibinfo{person}{Anne
  Benoit}, \bibinfo{person}{Fanny Dufoss{\'{e}}}, {and} \bibinfo{person}{Yves
  Robert}.} \bibinfo{year}{2011}\natexlab{}.
\newblock \showarticletitle{Brief announcement: reclaiming the energy of a
  schedule, models and algorithms}. In \bibinfo{booktitle}{\emph{{SPAA} 2011:
  Proceedings of the 23rd Annual {ACM} Symposium on Parallelism in Algorithms
  and Architectures, San Jose, CA, USA, June 4-6, 2011 (Co-located with {FCRC}
  2011)}}, \bibfield{editor}{\bibinfo{person}{Rajmohan Rajaraman} {and}
  \bibinfo{person}{Friedhelm~Meyer auf~der Heide}} (Eds.).
  \bibinfo{publisher}{{ACM}}, \bibinfo{pages}{135--136}.
\newblock
\urldef\tempurl%
\url{https://doi.org/10.1145/1989493.1989512}
\showDOI{\tempurl}


\bibitem[\protect\citeauthoryear{Bampis, D{\"{u}}rr, Kacem, and Milis}{Bampis
  et~al\mbox{.}}{2012}]%
        {BampisDKM12}
\bibfield{author}{\bibinfo{person}{Evripidis Bampis},
  \bibinfo{person}{Christoph D{\"{u}}rr}, \bibinfo{person}{Fadi Kacem}, {and}
  \bibinfo{person}{Ioannis Milis}.} \bibinfo{year}{2012}\natexlab{}.
\newblock \showarticletitle{Speed scaling with power down scheduling for
  agreeable deadlines}.
\newblock \bibinfo{journal}{\emph{Sustain. Comput. Informatics Syst.}}
  \bibinfo{volume}{2}, \bibinfo{number}{4} (\bibinfo{year}{2012}),
  \bibinfo{pages}{184--189}.
\newblock
\urldef\tempurl%
\url{https://doi.org/10.1016/j.suscom.2012.10.003}
\showDOI{\tempurl}


\bibitem[\protect\citeauthoryear{Bampis, Letsios, and Lucarelli}{Bampis
  et~al\mbox{.}}{2014}]%
        {BampisLL14}
\bibfield{author}{\bibinfo{person}{Evripidis Bampis},
  \bibinfo{person}{Dimitrios Letsios}, {and} \bibinfo{person}{Giorgio
  Lucarelli}.} \bibinfo{year}{2014}\natexlab{}.
\newblock \showarticletitle{A note on multiprocessor speed scaling with
  precedence constraints}. In \bibinfo{booktitle}{\emph{26th {ACM} Symposium on
  Parallelism in Algorithms and Architectures, {SPAA} '14, Prague, Czech
  Republic - June 23 - 25, 2014}}, \bibfield{editor}{\bibinfo{person}{Guy~E.
  Blelloch} {and} \bibinfo{person}{Peter Sanders}} (Eds.).
  \bibinfo{publisher}{{ACM}}, \bibinfo{pages}{138--142}.
\newblock
\urldef\tempurl%
\url{https://doi.org/10.1145/2612669.2612672}
\showDOI{\tempurl}


\bibitem[\protect\citeauthoryear{Baptiste}{Baptiste}{2000}]%
        {Baptiste00}
\bibfield{author}{\bibinfo{person}{Philippe Baptiste}.}
  \bibinfo{year}{2000}\natexlab{}.
\newblock \showarticletitle{Batching identical jobs}.
\newblock \bibinfo{journal}{\emph{Math. Methods Oper. Res.}}
  \bibinfo{volume}{52}, \bibinfo{number}{3} (\bibinfo{year}{2000}),
  \bibinfo{pages}{355--367}.
\newblock
\urldef\tempurl%
\url{https://doi.org/10.1007/s001860000088}
\showDOI{\tempurl}


\bibitem[\protect\citeauthoryear{Baptiste}{Baptiste}{2006}]%
        {Baptiste06}
\bibfield{author}{\bibinfo{person}{Philippe Baptiste}.}
  \bibinfo{year}{2006}\natexlab{}.
\newblock \showarticletitle{Scheduling unit tasks to minimize the number of
  idle periods: a polynomial time algorithm for offline dynamic power
  management}. In \bibinfo{booktitle}{\emph{Proceedings of the Seventeenth
  Annual {ACM-SIAM} Symposium on Discrete Algorithms, {SODA} 2006, Miami,
  Florida, USA, January 22-26, 2006}}. \bibinfo{publisher}{{ACM} Press},
  \bibinfo{pages}{364--367}.
\newblock
\urldef\tempurl%
\url{http://dl.acm.org/citation.cfm?id=1109557.1109598}
\showURL{%
\tempurl}


\bibitem[\protect\citeauthoryear{Baptiste, Chrobak, and D{\"{u}}rr}{Baptiste
  et~al\mbox{.}}{2007}]%
        {BaptisteChrobakDurr07}
\bibfield{author}{\bibinfo{person}{Philippe Baptiste}, \bibinfo{person}{Marek
  Chrobak}, {and} \bibinfo{person}{Christoph D{\"{u}}rr}.}
  \bibinfo{year}{2007}\natexlab{}.
\newblock \showarticletitle{Polynomial Time Algorithms for Minimum Energy
  Scheduling}. In \bibinfo{booktitle}{\emph{Algorithms - {ESA} 2007, 15th
  Annual European Symposium, Eilat, Israel, October 8-10, 2007, Proceedings}}
  \emph{(\bibinfo{series}{Lecture Notes in Computer Science},
  Vol.~\bibinfo{volume}{4698})}, \bibfield{editor}{\bibinfo{person}{Lars Arge},
  \bibinfo{person}{Michael Hoffmann}, {and} \bibinfo{person}{Emo Welzl}}
  (Eds.). \bibinfo{publisher}{Springer}, \bibinfo{pages}{136--150}.
\newblock
\urldef\tempurl%
\url{https://doi.org/10.1007/978-3-540-75520-3\_14}
\showDOI{\tempurl}


\bibitem[\protect\citeauthoryear{Baptiste and Schieber}{Baptiste and
  Schieber}{2003}]%
        {BaptisteSchieber03}
\bibfield{author}{\bibinfo{person}{Philippe Baptiste} {and}
  \bibinfo{person}{Baruch Schieber}.} \bibinfo{year}{2003}\natexlab{}.
\newblock \showarticletitle{A Note on Scheduling Tall/Small Multiprocessor
  Tasks with Unit Processing Time to Minimize Maximum Tardiness}.
\newblock \bibinfo{journal}{\emph{J. Sched.}} \bibinfo{volume}{6},
  \bibinfo{number}{4} (\bibinfo{year}{2003}), \bibinfo{pages}{395--404}.
\newblock
\urldef\tempurl%
\url{https://doi.org/10.1023/A:1024012811536}
\showDOI{\tempurl}


\bibitem[\protect\citeauthoryear{Batra, Garg, and Kumar}{Batra
  et~al\mbox{.}}{2018}]%
        {BatraGK18}
\bibfield{author}{\bibinfo{person}{Jatin Batra}, \bibinfo{person}{Naveen Garg},
  {and} \bibinfo{person}{Amit Kumar}.} \bibinfo{year}{2018}\natexlab{}.
\newblock \showarticletitle{Constant Factor Approximation Algorithm for
  Weighted Flow Time on a Single Machine in Pseudo-Polynomial Time}. In
  \bibinfo{booktitle}{\emph{59th {IEEE} Annual Symposium on Foundations of
  Computer Science, {FOCS} 2018, Paris, France, October 7-9, 2018}},
  \bibfield{editor}{\bibinfo{person}{Mikkel Thorup}} (Ed.).
  \bibinfo{publisher}{{IEEE} Computer Society}, \bibinfo{pages}{778--789}.
\newblock
\urldef\tempurl%
\url{https://doi.org/10.1109/FOCS.2018.00079}
\showDOI{\tempurl}


\bibitem[\protect\citeauthoryear{Bender, Bunde, Leung, McCauley, and
  Phillips}{Bender et~al\mbox{.}}{2013}]%
        {BenderBLMP13}
\bibfield{author}{\bibinfo{person}{Michael~A. Bender},
  \bibinfo{person}{David~P. Bunde}, \bibinfo{person}{Vitus~J. Leung},
  \bibinfo{person}{Samuel McCauley}, {and} \bibinfo{person}{Cynthia~A.
  Phillips}.} \bibinfo{year}{2013}\natexlab{}.
\newblock \showarticletitle{Efficient scheduling to minimize calibrations}. In
  \bibinfo{booktitle}{\emph{25th {ACM} Symposium on Parallelism in Algorithms
  and Architectures, {SPAA} '13, Montreal, QC, Canada - July 23 - 25, 2013}},
  \bibfield{editor}{\bibinfo{person}{Guy~E. Blelloch} {and}
  \bibinfo{person}{Berthold V{\"{o}}cking}} (Eds.). \bibinfo{publisher}{{ACM}},
  \bibinfo{pages}{280--287}.
\newblock
\urldef\tempurl%
\url{https://doi.org/10.1145/2486159.2486193}
\showDOI{\tempurl}


\bibitem[\protect\citeauthoryear{C{\u{a}}linescu and Wang}{C{\u{a}}linescu and
  Wang}{2021}]%
        {CalinescuWang21}
\bibfield{author}{\bibinfo{person}{Gruia C{\u{a}}linescu} {and}
  \bibinfo{person}{Kai Wang}.} \bibinfo{year}{2021}\natexlab{}.
\newblock \showarticletitle{A new {LP} rounding algorithm for the active time
  problem}.
\newblock \bibinfo{journal}{\emph{J. Sched.}} \bibinfo{volume}{24},
  \bibinfo{number}{5} (\bibinfo{year}{2021}), \bibinfo{pages}{543--552}.
\newblock
\urldef\tempurl%
\url{https://doi.org/10.1007/s10951-020-00676-1}
\showDOI{\tempurl}


\bibitem[\protect\citeauthoryear{Chan, van Dijk, Fleszar, Spoerhase, and
  Wolff}{Chan et~al\mbox{.}}{2018}]%
        {ChanD0SW18}
\bibfield{author}{\bibinfo{person}{Timothy~M. Chan}, \bibinfo{person}{Thomas~C.
  van Dijk}, \bibinfo{person}{Krzysztof Fleszar}, \bibinfo{person}{Joachim
  Spoerhase}, {and} \bibinfo{person}{Alexander Wolff}.}
  \bibinfo{year}{2018}\natexlab{}.
\newblock \showarticletitle{Stabbing Rectangles by Line Segments - How
  Decomposition Reduces the Shallow-Cell Complexity}. In
  \bibinfo{booktitle}{\emph{29th International Symposium on Algorithms and
  Computation, {ISAAC} 2018, December 16-19, 2018, Jiaoxi, Yilan, Taiwan}}
  \emph{(\bibinfo{series}{LIPIcs}, Vol.~\bibinfo{volume}{123})},
  \bibfield{editor}{\bibinfo{person}{Wen{-}Lian Hsu},
  \bibinfo{person}{Der{-}Tsai Lee}, {and} \bibinfo{person}{Chung{-}Shou Liao}}
  (Eds.). \bibinfo{publisher}{Schloss Dagstuhl - Leibniz-Zentrum f{\"{u}}r
  Informatik}, \bibinfo{pages}{61:1--61:13}.
\newblock
\urldef\tempurl%
\url{https://doi.org/10.4230/LIPIcs.ISAAC.2018.61}
\showDOI{\tempurl}


\bibitem[\protect\citeauthoryear{Chang, Gabow, and Khuller}{Chang
  et~al\mbox{.}}{2012}]%
        {ChangGabowKhuller}
\bibfield{author}{\bibinfo{person}{Jessica Chang}, \bibinfo{person}{Harold~N.
  Gabow}, {and} \bibinfo{person}{Samir Khuller}.}
  \bibinfo{year}{2012}\natexlab{}.
\newblock \showarticletitle{A Model for Minimizing Active Processor Time}. In
  \bibinfo{booktitle}{\emph{Algorithms - {ESA} 2012 - 20th Annual European
  Symposium, Ljubljana, Slovenia, September 10-12, 2012. Proceedings}}
  \emph{(\bibinfo{series}{Lecture Notes in Computer Science},
  Vol.~\bibinfo{volume}{7501})}, \bibfield{editor}{\bibinfo{person}{Leah
  Epstein} {and} \bibinfo{person}{Paolo Ferragina}} (Eds.).
  \bibinfo{publisher}{Springer}, \bibinfo{pages}{289--300}.
\newblock
\urldef\tempurl%
\url{https://doi.org/10.1007/978-3-642-33090-2\_26}
\showDOI{\tempurl}


\bibitem[\protect\citeauthoryear{Chang, Khuller, and Mukherjee}{Chang
  et~al\mbox{.}}{2017}]%
        {ChangKhullerMukherjee}
\bibfield{author}{\bibinfo{person}{Jessica Chang}, \bibinfo{person}{Samir
  Khuller}, {and} \bibinfo{person}{Koyel Mukherjee}.}
  \bibinfo{year}{2017}\natexlab{}.
\newblock \showarticletitle{{LP} rounding and combinatorial algorithms for
  minimizing active and busy time}.
\newblock \bibinfo{journal}{\emph{J. Sched.}} \bibinfo{volume}{20},
  \bibinfo{number}{6} (\bibinfo{year}{2017}), \bibinfo{pages}{657--680}.
\newblock


\bibitem[\protect\citeauthoryear{Chau and Li}{Chau and Li}{2020}]%
        {ChauL20}
\bibfield{author}{\bibinfo{person}{Vincent Chau} {and} \bibinfo{person}{Minming
  Li}.} \bibinfo{year}{2020}\natexlab{}.
\newblock \showarticletitle{Active and Busy Time Scheduling Problem: {A}
  Survey}. In \bibinfo{booktitle}{\emph{Complexity and Approximation - In
  Memory of Ker-I Ko}} \emph{(\bibinfo{series}{Lecture Notes in Computer
  Science}, Vol.~\bibinfo{volume}{12000})},
  \bibfield{editor}{\bibinfo{person}{Ding{-}Zhu Du} {and} \bibinfo{person}{Jie
  Wang}} (Eds.). \bibinfo{publisher}{Springer}, \bibinfo{pages}{219--229}.
\newblock
\urldef\tempurl%
\url{https://doi.org/10.1007/978-3-030-41672-0\_13}
\showDOI{\tempurl}


\bibitem[\protect\citeauthoryear{Demaine, Ghodsi, Hajiaghayi,
  Sayedi{-}Roshkhar, and Zadimoghaddam}{Demaine et~al\mbox{.}}{2007}]%
        {DemaineGHSZ07}
\bibfield{author}{\bibinfo{person}{Erik~D. Demaine}, \bibinfo{person}{Mohammad
  Ghodsi}, \bibinfo{person}{Mohammad~Taghi Hajiaghayi},
  \bibinfo{person}{Amin~S. Sayedi{-}Roshkhar}, {and} \bibinfo{person}{Morteza
  Zadimoghaddam}.} \bibinfo{year}{2007}\natexlab{}.
\newblock \showarticletitle{Scheduling to minimize gaps and power consumption}.
  In \bibinfo{booktitle}{\emph{{SPAA} 2007: Proceedings of the 19th Annual
  {ACM} Symposium on Parallelism in Algorithms and Architectures, San Diego,
  California, USA, June 9-11, 2007}},
  \bibfield{editor}{\bibinfo{person}{Phillip~B. Gibbons} {and}
  \bibinfo{person}{Christian Scheideler}} (Eds.). \bibinfo{publisher}{{ACM}},
  \bibinfo{pages}{46--54}.
\newblock
\urldef\tempurl%
\url{https://doi.org/10.1145/1248377.1248385}
\showDOI{\tempurl}


\bibitem[\protect\citeauthoryear{Drozdowski}{Drozdowski}{2009}]%
        {Drozdowski09}
\bibfield{author}{\bibinfo{person}{Maciej Drozdowski}.}
  \bibinfo{year}{2009}\natexlab{}.
\newblock \bibinfo{booktitle}{\emph{Scheduling for Parallel Processing}}.
\newblock \bibinfo{publisher}{Springer}.
\newblock
\showISBNx{978-1-84882-309-9}
\urldef\tempurl%
\url{https://doi.org/10.1007/978-1-84882-310-5}
\showDOI{\tempurl}


\bibitem[\protect\citeauthoryear{D{\"{u}}rr and Hurand}{D{\"{u}}rr and
  Hurand}{2006}]%
        {DurrHurand06}
\bibfield{author}{\bibinfo{person}{Christoph D{\"{u}}rr} {and}
  \bibinfo{person}{Mathilde Hurand}.} \bibinfo{year}{2006}\natexlab{}.
\newblock \showarticletitle{Finding Total Unimodularity in Optimization
  Problems Solved by Linear Programs}. In \bibinfo{booktitle}{\emph{Algorithms
  - {ESA} 2006, 14th Annual European Symposium, Zurich, Switzerland, September
  11-13, 2006, Proceedings}} \emph{(\bibinfo{series}{Lecture Notes in Computer
  Science}, Vol.~\bibinfo{volume}{4168})},
  \bibfield{editor}{\bibinfo{person}{Yossi Azar} {and} \bibinfo{person}{Thomas
  Erlebach}} (Eds.). \bibinfo{publisher}{Springer}, \bibinfo{pages}{315--326}.
\newblock
\urldef\tempurl%
\url{https://doi.org/10.1007/11841036\_30}
\showDOI{\tempurl}


\bibitem[\protect\citeauthoryear{Eisenbrand, Gallato, Svensson, and
  Venzin}{Eisenbrand et~al\mbox{.}}{2021}]%
        {EGSV}
\bibfield{author}{\bibinfo{person}{Friedrich Eisenbrand},
  \bibinfo{person}{Martina Gallato}, \bibinfo{person}{Ola Svensson}, {and}
  \bibinfo{person}{Moritz Venzin}.} \bibinfo{year}{2021}\natexlab{}.
\newblock \showarticletitle{A {QPTAS} for stabbing rectangles}.
\newblock \bibinfo{journal}{\emph{CoRR}}  \bibinfo{volume}{abs/2107.06571}
  (\bibinfo{year}{2021}).
\newblock
\showeprint[arXiv]{2107.06571}
\urldef\tempurl%
\url{https://arxiv.org/abs/2107.06571}
\showURL{%
\tempurl}


\bibitem[\protect\citeauthoryear{Even, Levi, Rawitz, Schieber, Shahar, and
  Sviridenko}{Even et~al\mbox{.}}{2008}]%
        {EvenLRSSS08}
\bibfield{author}{\bibinfo{person}{Guy Even}, \bibinfo{person}{Retsef Levi},
  \bibinfo{person}{Dror Rawitz}, \bibinfo{person}{Baruch Schieber},
  \bibinfo{person}{Shimon Shahar}, {and} \bibinfo{person}{Maxim Sviridenko}.}
  \bibinfo{year}{2008}\natexlab{}.
\newblock \showarticletitle{Algorithms for capacitated rectangle stabbing and
  lot sizing with joint set-up costs}.
\newblock \bibinfo{journal}{\emph{{ACM} Trans. Algorithms}}
  \bibinfo{volume}{4}, \bibinfo{number}{3} (\bibinfo{year}{2008}),
  \bibinfo{pages}{34:1--34:17}.
\newblock
\urldef\tempurl%
\url{https://doi.org/10.1145/1367064.1367074}
\showDOI{\tempurl}


\bibitem[\protect\citeauthoryear{Feige, Kulkarni, and Li}{Feige
  et~al\mbox{.}}{2019}]%
        {FeigeKL19}
\bibfield{author}{\bibinfo{person}{Uriel Feige}, \bibinfo{person}{Janardhan
  Kulkarni}, {and} \bibinfo{person}{Shi Li}.} \bibinfo{year}{2019}\natexlab{}.
\newblock \showarticletitle{A Polynomial Time Constant Approximation For
  Minimizing Total Weighted Flow-time}. In
  \bibinfo{booktitle}{\emph{Proceedings of the Thirtieth Annual {ACM-SIAM}
  Symposium on Discrete Algorithms, {SODA} 2019, San Diego, California, USA,
  January 6-9, 2019}}, \bibfield{editor}{\bibinfo{person}{Timothy~M. Chan}}
  (Ed.). \bibinfo{publisher}{{SIAM}}, \bibinfo{pages}{1585--1595}.
\newblock
\urldef\tempurl%
\url{https://doi.org/10.1137/1.9781611975482.96}
\showDOI{\tempurl}


\bibitem[\protect\citeauthoryear{Fineman and Sheridan}{Fineman and
  Sheridan}{2015}]%
        {FinemanSheridan15}
\bibfield{author}{\bibinfo{person}{Jeremy~T. Fineman} {and}
  \bibinfo{person}{Brendan Sheridan}.} \bibinfo{year}{2015}\natexlab{}.
\newblock \showarticletitle{Scheduling Non-Unit Jobs to Minimize Calibrations}.
  In \bibinfo{booktitle}{\emph{Proceedings of the 27th {ACM} on Symposium on
  Parallelism in Algorithms and Architectures, {SPAA} 2015, Portland, OR, USA,
  June 13-15, 2015}}, \bibfield{editor}{\bibinfo{person}{Guy~E. Blelloch} {and}
  \bibinfo{person}{Kunal Agrawal}} (Eds.). \bibinfo{publisher}{{ACM}},
  \bibinfo{pages}{161--170}.
\newblock
\urldef\tempurl%
\url{https://doi.org/10.1145/2755573.2755605}
\showDOI{\tempurl}


\bibitem[\protect\citeauthoryear{Flammini, Monaco, Moscardelli, Shachnai,
  Shalom, Tamir, and Zaks}{Flammini et~al\mbox{.}}{2009}]%
        {FMMS-busytime}
\bibfield{author}{\bibinfo{person}{Michele Flammini},
  \bibinfo{person}{Gianpiero Monaco}, \bibinfo{person}{Luca Moscardelli},
  \bibinfo{person}{Hadas Shachnai}, \bibinfo{person}{Mordechai Shalom},
  \bibinfo{person}{Tami Tamir}, {and} \bibinfo{person}{Shmuel Zaks}.}
  \bibinfo{year}{2009}\natexlab{}.
\newblock \showarticletitle{Minimizing total busy time in parallel scheduling
  with application to optical networks}. In \bibinfo{booktitle}{\emph{2009 IEEE
  International Symposium on Parallel Distributed Processing}}.
  \bibinfo{pages}{1--12}.
\newblock
\urldef\tempurl%
\url{https://doi.org/10.1109/IPDPS.2009.5161017}
\showDOI{\tempurl}


\bibitem[\protect\citeauthoryear{Flammini, Monaco, Moscardelli, Shachnai,
  Shalom, Tamir, and Zaks}{Flammini et~al\mbox{.}}{2010}]%
        {FlamminiMMSSTZ10}
\bibfield{author}{\bibinfo{person}{Michele Flammini},
  \bibinfo{person}{Gianpiero Monaco}, \bibinfo{person}{Luca Moscardelli},
  \bibinfo{person}{Hadas Shachnai}, \bibinfo{person}{Mordechai Shalom},
  \bibinfo{person}{Tami Tamir}, {and} \bibinfo{person}{Shmuel Zaks}.}
  \bibinfo{year}{2010}\natexlab{}.
\newblock \showarticletitle{Minimizing total busy time in parallel scheduling
  with application to optical networks}.
\newblock \bibinfo{journal}{\emph{Theor. Comput. Sci.}} \bibinfo{volume}{411},
  \bibinfo{number}{40-42} (\bibinfo{year}{2010}), \bibinfo{pages}{3553--3562}.
\newblock
\urldef\tempurl%
\url{https://doi.org/10.1016/j.tcs.2010.05.011}
\showDOI{\tempurl}


\bibitem[\protect\citeauthoryear{Grandoni, M{\"{o}}mke, and Wiese}{Grandoni
  et~al\mbox{.}}{2021}]%
        {GMW21}
\bibfield{author}{\bibinfo{person}{Fabrizio Grandoni}, \bibinfo{person}{Tobias
  M{\"{o}}mke}, {and} \bibinfo{person}{Andreas Wiese}.}
  \bibinfo{year}{2021}\natexlab{}.
\newblock \showarticletitle{Faster (1+{\(\epsilon\)})-Approximation for
  Unsplittable Flow on a Path via Resource Augmentation and Back}. In
  \bibinfo{booktitle}{\emph{29th Annual European Symposium on Algorithms, {ESA}
  2021, September 6-8, 2021, Lisbon, Portugal (Virtual Conference)}}
  \emph{(\bibinfo{series}{LIPIcs}, Vol.~\bibinfo{volume}{204})},
  \bibfield{editor}{\bibinfo{person}{Petra Mutzel}, \bibinfo{person}{Rasmus
  Pagh}, {and} \bibinfo{person}{Grzegorz Herman}} (Eds.).
  \bibinfo{publisher}{Schloss Dagstuhl - Leibniz-Zentrum f{\"{u}}r Informatik},
  \bibinfo{pages}{49:1--49:15}.
\newblock
\urldef\tempurl%
\url{https://doi.org/10.4230/LIPIcs.ESA.2021.49}
\showDOI{\tempurl}


\bibitem[\protect\citeauthoryear{Khandekar, Schieber, Shachnai, and
  Tamir}{Khandekar et~al\mbox{.}}{2010}]%
        {KSST-busytime}
\bibfield{author}{\bibinfo{person}{Rohit Khandekar}, \bibinfo{person}{Baruch
  Schieber}, \bibinfo{person}{Hadas Shachnai}, {and} \bibinfo{person}{Tami
  Tamir}.} \bibinfo{year}{2010}\natexlab{}.
\newblock \showarticletitle{{Minimizing Busy Time in Multiple Machine Real-time
  Scheduling}}. In \bibinfo{booktitle}{\emph{IARCS Annual Conference on
  Foundations of Software Technology and Theoretical Computer Science (FSTTCS
  2010)}} \emph{(\bibinfo{series}{Leibniz International Proceedings in
  Informatics (LIPIcs)}, Vol.~\bibinfo{volume}{8})},
  \bibfield{editor}{\bibinfo{person}{Kamal Lodaya} {and} \bibinfo{person}{Meena
  Mahajan}} (Eds.). \bibinfo{publisher}{Schloss Dagstuhl--Leibniz-Zentrum fuer
  Informatik}, \bibinfo{address}{Dagstuhl, Germany}, \bibinfo{pages}{169--180}.
\newblock
\showISBNx{978-3-939897-23-1}
\showISSN{1868-8969}
\urldef\tempurl%
\url{https://doi.org/10.4230/LIPIcs.FSTTCS.2010.169}
\showDOI{\tempurl}


\bibitem[\protect\citeauthoryear{Koehler and Khuller}{Koehler and
  Khuller}{2017}]%
        {KoehlerKhuller17-busytime}
\bibfield{author}{\bibinfo{person}{Frederic Koehler} {and}
  \bibinfo{person}{Samir Khuller}.} \bibinfo{year}{2017}\natexlab{}.
\newblock \showarticletitle{Busy time scheduling on a bounded number of
  machines (Extended Abstract)}. In \bibinfo{booktitle}{\emph{Workshop on
  Algorithms and Data Structures {WADS}}}. \bibinfo{publisher}{Springer},
  \bibinfo{pages}{521--532}.
\newblock
\urldef\tempurl%
\url{https://doi.org/10.1007/978-3-319-62127-2\_44}
\showDOI{\tempurl}


\bibitem[\protect\citeauthoryear{Kononov and Kovalenko}{Kononov and
  Kovalenko}{2020}]%
        {KononovK20}
\bibfield{author}{\bibinfo{person}{Alexander~V. Kononov} {and}
  \bibinfo{person}{Yulia~V. Kovalenko}.} \bibinfo{year}{2020}\natexlab{}.
\newblock \showarticletitle{Approximation algorithms for energy-efficient
  scheduling of parallel jobs}.
\newblock \bibinfo{journal}{\emph{J. Sched.}} \bibinfo{volume}{23},
  \bibinfo{number}{6} (\bibinfo{year}{2020}), \bibinfo{pages}{693--709}.
\newblock
\urldef\tempurl%
\url{https://doi.org/10.1007/s10951-020-00653-8}
\showDOI{\tempurl}


\bibitem[\protect\citeauthoryear{Kumar and Khuller}{Kumar and Khuller}{2018}]%
        {KumarKhuller18}
\bibfield{author}{\bibinfo{person}{Saurabh Kumar} {and} \bibinfo{person}{Samir
  Khuller}.} \bibinfo{year}{2018}\natexlab{}.
\newblock \showarticletitle{Brief Announcement: {A} Greedy 2 Approximation for
  the Active Time Problem}. In \bibinfo{booktitle}{\emph{Proceedings of the
  30th on Symposium on Parallelism in Algorithms and Architectures, {SPAA}
  2018, Vienna, Austria, July 16-18, 2018}},
  \bibfield{editor}{\bibinfo{person}{Christian Scheideler} {and}
  \bibinfo{person}{Jeremy~T. Fineman}} (Eds.). \bibinfo{publisher}{{ACM}},
  \bibinfo{pages}{347--349}.
\newblock
\urldef\tempurl%
\url{https://doi.org/10.1145/3210377.3210659}
\showDOI{\tempurl}


\bibitem[\protect\citeauthoryear{Liu and Tang}{Liu and Tang}{2021}]%
        {LT21}
\bibfield{author}{\bibinfo{person}{Mozhengfu Liu} {and} \bibinfo{person}{Xueyan
  Tang}.} \bibinfo{year}{2021}\natexlab{}.
\newblock \showarticletitle{Analysis of Busy-Time Scheduling on Heterogeneous
  Machines}. In \bibinfo{booktitle}{\emph{{SPAA} '21: 33rd {ACM} Symposium on
  Parallelism in Algorithms and Architectures, Virtual Event, USA, 6-8 July,
  2021}}, \bibfield{editor}{\bibinfo{person}{Kunal Agrawal} {and}
  \bibinfo{person}{Yossi Azar}} (Eds.). \bibinfo{publisher}{{ACM}},
  \bibinfo{pages}{340--350}.
\newblock
\urldef\tempurl%
\url{https://doi.org/10.1145/3409964.3461795}
\showDOI{\tempurl}


\bibitem[\protect\citeauthoryear{Pruhs, van Stee, and Uthaisombut}{Pruhs
  et~al\mbox{.}}{2005}]%
        {PruhsSU05}
\bibfield{author}{\bibinfo{person}{Kirk Pruhs}, \bibinfo{person}{Rob van Stee},
  {and} \bibinfo{person}{Patchrawat Uthaisombut}.}
  \bibinfo{year}{2005}\natexlab{}.
\newblock \showarticletitle{Speed Scaling of Tasks with Precedence
  Constraints}. In \bibinfo{booktitle}{\emph{Approximation and Online
  Algorithms, Third International Workshop, {WAOA} 2005, Palma de Mallorca,
  Spain, October 6-7, 2005, Revised Papers}} \emph{(\bibinfo{series}{Lecture
  Notes in Computer Science}, Vol.~\bibinfo{volume}{3879})},
  \bibfield{editor}{\bibinfo{person}{Thomas Erlebach} {and}
  \bibinfo{person}{Giuseppe Persiano}} (Eds.). \bibinfo{publisher}{Springer},
  \bibinfo{pages}{307--319}.
\newblock
\urldef\tempurl%
\url{https://doi.org/10.1007/11671411\_24}
\showDOI{\tempurl}


\bibitem[\protect\citeauthoryear{Rohwedder and Wiese}{Rohwedder and
  Wiese}{2021}]%
        {RohwedderW21}
\bibfield{author}{\bibinfo{person}{Lars Rohwedder} {and}
  \bibinfo{person}{Andreas Wiese}.} \bibinfo{year}{2021}\natexlab{}.
\newblock \showarticletitle{A {(2} + \emph{{\(\epsilon\)}})-approximation
  algorithm for preemptive weighted flow time on a single machine}. In
  \bibinfo{booktitle}{\emph{{STOC} '21: 53rd Annual {ACM} {SIGACT} Symposium on
  Theory of Computing, Virtual Event, Italy, June 21-25, 2021}},
  \bibfield{editor}{\bibinfo{person}{Samir Khuller} {and}
  \bibinfo{person}{Virginia~Vassilevska Williams}} (Eds.).
  \bibinfo{publisher}{{ACM}}, \bibinfo{pages}{1042--1055}.
\newblock
\urldef\tempurl%
\url{https://doi.org/10.1145/3406325.3451075}
\showDOI{\tempurl}


\bibitem[\protect\citeauthoryear{Saha and Purohit}{Saha and Purohit}{2021}]%
        {SahaPurohit21}
\bibfield{author}{\bibinfo{person}{Sagnik Saha} {and} \bibinfo{person}{Manish
  Purohit}.} \bibinfo{year}{2021}\natexlab{}.
\newblock \showarticletitle{NP-completeness of the Active Time Scheduling
  Problem}.
\newblock \bibinfo{journal}{\emph{CoRR}}  \bibinfo{volume}{abs/2112.03255}
  (\bibinfo{year}{2021}).
\newblock
\showeprint[arXiv]{2112.03255}
\urldef\tempurl%
\url{https://arxiv.org/abs/2112.03255}
\showURL{%
\tempurl}


\bibitem[\protect\citeauthoryear{Simon, Falk, Megow, and Teich}{Simon
  et~al\mbox{.}}{2020}]%
        {SimonFalkMegowTeich20}
\bibfield{author}{\bibinfo{person}{Bertrand Simon}, \bibinfo{person}{Joachim
  Falk}, \bibinfo{person}{Nicole Megow}, {and} \bibinfo{person}{J{\"{u}}rgen
  Teich}.} \bibinfo{year}{2020}\natexlab{}.
\newblock \showarticletitle{Energy Minimization in {DAG} Scheduling on MPSoCs
  at Run-Time: Theory and Practice}. In \bibinfo{booktitle}{\emph{Workshop on
  Next Generation Real-Time Embedded Systems, NG-RES@HiPEAC 2020, January 21,
  2020, Bologna, Italy}} \emph{(\bibinfo{series}{{OASICS}},
  Vol.~\bibinfo{volume}{77})}, \bibfield{editor}{\bibinfo{person}{Marko
  Bertogna} {and} \bibinfo{person}{Federico Terraneo}} (Eds.).
  \bibinfo{publisher}{Schloss Dagstuhl - Leibniz-Zentrum f{\"{u}}r Informatik},
  \bibinfo{pages}{2:1--2:13}.
\newblock
\urldef\tempurl%
\url{https://doi.org/10.4230/OASIcs.NG-RES.2020.2}
\showDOI{\tempurl}


\end{thebibliography}

\end{document}